\theoremstyle{remark}
\theoremstyle{definition}
\theoremstyle{plain}
\newcommand{\welf}{W}
\newcommand{\mech}{M}
\newcommand{\mechopt}{\mech}
\newcommand{\initdata}{Z}
\newcommand{\initdataii}[1]{\initdata_{#1}}
\newcommand{\initdatai}{\initdataii{i}}
\newcommand{\initdataj}{\initdataii{j}}
\newcommand{\receiveddata}{Y'}
\newcommand{\subdata}{X'}
\newcommand{\subdataii}[1]{\subdata_{#1}}
\newcommand{\subdatai}{\subdataii{i}}
\newcommand{\subdatami}{\subdataii{-i}}
\newcommand{\val}{v}
\newcommand{\valii}[1]{\val_{#1}}
\newcommand{\vali}{\valii{i}}
\newcommand{\valj}{\valii{j}}
\newcommand{\valitem}{v}
\newcommand{\valitemii}[1]{\valitem_{#1}}
\newcommand{\valitemi}{\valitemii{i}}
\newcommand{\valitemj}{\valitemii{j}}
\newcommand{\valfuncnob}{v^{\rm e}}
\newcommand{\valfunc}{v^{\rm e}}
\newcommand{\valfuncii}[1]{\valfunc_{#1}}
\newcommand{\valfuncj}{\valfuncii{j}}
\newcommand{\valmech}{v^{\rm B}}
\newcommand{\valmechii}[1]{\valmech_{#1}}
\newcommand{\valmechj}{\valmechii{j}}
\newcommand{\valdata}{v^{\rm iid}}
\newcommand{\valdataii}[1]{\valdata_{#1}}
\newcommand{\valdatai}{\valdataii{i}}
\newcommand{\valdataj}{\valdataii{j}}
\newcommand{\itemprice}{q}
\newcommand{\mjp}{\overline{m}_{j}}
\newcommand{\optitemprice}{q^{\rm OPT}}
\newcommand{\efscheme}{M}
\newcommand{\efprofit}{{\rm profit}}
\newcommand{\optnone}{\reqamnt_1^{\rm OPT}}
\newcommand{\OPTREVmech}{\mathrm{rev}^{\rm OPT}}
\newcommand{\ExpectREV}{\overline{\mathrm{rev}}}
\newcommand{\OPTREV}{\overline{\mathrm{rev}}^{\rm OPT}}
\newcommand{\utilb}{u^{\rm B}}
\newcommand{\utilbii}[1]{\utilb_{#1}}
\newcommand{\utilbj}{\utilbii{j}}
\newcommand{\utilc}{u^{\rm C}}
\newcommand{\utilcii}[1]{\utilc_{#1}}
\newcommand{\utilci}{\utilcii{i}}
\newcommand{\identity}{\II}
\newcommand{\subfunc}{X}
\newcommand{\subfunci}{\subfunc_i}
\newcommand{\subfuncj}{\subfunc_j}
\newcommand{\nni}{n_i}
\newcommand{\colamnt}{n}
\newcommand{\colamnti}{\colamnt_i}
\newcommand{\colamntr}{\colamnt'}
\newcommand{\colamntri}{\colamntr_i}
\newcommand{\reqamnt}{N}
\newcommand{\reqamnti}{\reqamnt_i}
\newcommand{\reqamntr}{\reqamnt'}
\newcommand{\reqamntri}{\reqamntr_i}
\newcommand{\reqamntrmi}{\reqamntr_{-i}}
\newcommand{\reqamntrj}{\reqamntr_j}
\newcommand{\numdata}{N^{\rm tot}}
\newcommand{\parahead}[1]{\vspace{0.025in}\noindent\textbf{#1. }}
\newcommand{\subparahead}[1]{\vspace{0.01in}\noindent\emph{#1. }}
\newcommand{\paraheadwnogap}[1]{\vspace{0.0in}\noindent\textbf{#1. }}
\newcommand{\subparaheadwnogap}[1]{\vspace{0.0in}\noindent\emph{#1. }}
\newcommand{\eps}{\epsilon}
\newcommand{\Bcal}{\mathcal{B}}
\newcommand{\Ccal}{\mathcal{C}}
\newcommand{\Ncal}{\mathcal{N}}
\newcommand{\Xcal}{\mathcal{X}}
\newcommand{\EE}{\mathbb{E}} 
\newcommand{\II}{\mathbb{I}} 
\newcommand{\NN}{\mathbb{N}} 
\newcommand{\RR}{\mathbb{R}} 
\newcommand*{\argmax}{\mathop{\mathrm{argmax}}}
\newcommand{\bigO}{\mathcal{O}}
\newcommand{\bigOtilde}{\widetilde{\mathcal{O}}}
\newcommand{\defeq}{:=}
\newcommand{\rbr}[1]{\left(#1\right)}
\newcommand{\sbr}[1]{\left[#1\right]}
\newcommand{\cbr}[1]{\left\{#1\right\}}
\newcommand{\abr}[1]{\left|#1\right|}
\newcommand{\BlackBox}{\rule{1.5ex}{1.5ex}}  
\def\QED{~\rule[-1pt]{5pt}{5pt}\par\medskip}
\newenvironment{proof}{\par\noindent{\bf Proof\ }}{\hfill\BlackBox\\[2mm]}
\newtheorem{theorem}{Theorem}
\newtheorem{definition}{Definition}
\newtheorem{remark}{Remark}
\newtheorem{lemma}[theorem]{Lemma}
\newtheorem{corollary}[theorem]{Corollary}
\definecolor{darkblue}{rgb}{0.0,0.0,0.7} 
\newcommand\numberthis{\addtocounter{equation}{1}\tag{\theequation}}
\DeclareRobustCommand\widecheck[1]{{\mathpalette\@widecheck{#1}}}
\def\@widecheck#1#2{%
    \setbox\z@\hbox{\m@th$#1#2$}%
    \setbox\tw@\hbox{\m@th$#1%
       \widehat{%
          \vrule\@width\z@\@height\ht\z@
          \vrule\@height\z@\@width\wd\z@}$}%
    \dp\tw@-\ht\z@
    \@tempdima\ht\z@ \advance\@tempdima2\ht\tw@ \divide\@tempdima\thr@@
    \setbox\tw@\hbox{%
       \raise\@tempdima\hbox{\scalebox{1}[-1]{\lower\@tempdima\box
\tw@}}}%
    {\ooalign{\box\tw@ \cr \box\z@}}}
\newcommand{\contributors}{\Ccal}
\newcommand{\buyers}{\Bcal}
\newcommand{\cost}{c}
\newcommand{\costi}{\cost_i}
\newcommand{\costj}{\cost_j}
\newcommand{\costone}{\cost_1}
\newcommand{\muhat}{\widehat{\mu}}
\newcommand{\muhatsm}{\widehat{\mu}}
\newcommand{\mm}{m}
\newcommand{\mmj}{\mm_j}
\newcommand{\mmr}{\mm'}
\newcommand{\mmrj}{\mm'_j}
\newcommand{\mmoptj}{\mm_j}
\newcommand{\mmoptk}{\mm_k}
\newcommand{\dataspace}{\Xcal}
\newcommand{\ie}{\textit{i}.\textit{e}., }
\newcommand{\eg}{\textit{e}.\textit{g}., }
\newcommand{\price}{p}
\newcommand{\pricej}{\price_j}
\newcommand{\priceoptj}{\price_j}
\newcommand{\priceoptk}{\price_k}
\newcommand{\pricer}{p'}
\newcommand{\priceri}{\pricer_i}
\newcommand{\pricerj}{\pricer_j}
\newcommand{\pricecurv}{q} 
\newcommand{\pay}{\pi}
\newcommand{\payi}{\pay_i}
\newcommand{\payopti}{\pay_i}
\newcommand{\payr}{\pay'}
\newcommand{\payri}{\pay'_i}
\newcommand{\mmk}{m_k}
\newcommand{\mmNi}{m_{N,i}}
\newcommand{\mmNj}{m_{N,j}}
\newcommand{\optprofitA}{{\rm OPT}^+}
\newcommand{\numdataA}{N^{+}}
\newcommand{\pricejA}{\itemprice_j^{+}}
\newcommand{\pricekA}{\itemprice_k^{+}}
\newcommand{\mmjA}{m_{j}^{+}}
\newcommand{\mmkA}{m_{k}^{+}}
\newcommand{\strat}{s}
\newcommand{\strati}{\strat_i}
\newcommand{\stratmi}{\strat_{-i}}
\newcommand{\stratopt}{s^{\identity}}
\newcommand{\stratopti}{\stratopt_i}
\newcommand{\stratoptmi}{\stratopt_{-i}}
\newcommand{\nnopt}{\reqamnt}
\newcommand{\nnopti}{\nnopt_i}
\newcommand{\profit}{{\rm profit}}
\newcommand{\blprofit}{{\rm OPT}}
\newcommand{\OPT}{{\rm OPT}}
\newcommand{\inputOPT}{\widetilde{\rm OPT}}
\newcommand{\inputreqamnt}{N^{\widetilde{\rm OPT}}}
\newcommand{\EEV}[1]{\mathop{\EE}_{#1}}
\newcommand{\efclass}{{\rm\bf EF}}
\newcommand{\mutilde}{\widetilde{\mu}}
\newcommand{\reqamntoneopt}{N^{\OPT}}
\newcommand{\buyersell}{\widetilde{m}}
\newcommand{\buyersellj}{\buyersell_j}
\newcommand{\buyerexpprice}{\widetilde{p}}
\newcommand{\buyerexppricej}{\buyerexpprice_j}
\title{Incentivizing Truthful Submissions in a Data Marketplace for Mean Estimation}
\author{%
  Keran Chen \\
  UW-Madison\\
  \texttt{kchen429@wisc.edu} 
   \and
   Alex Clinton \\
   UW-Madison \\
   \texttt{aclinton@wisc.edu} 
   \and
   Kirthevasan Kandasamy \\
   UW-Madison \\ \texttt{kandasamy@cs.wisc.edu} \\  
}
\date{}
\begin{document}

\maketitle

\newcommand{\insertTableApproxResults}{%
\begin{table*}[t]
\caption{Algorithms from Prior Work}
    \centering
    \renewcommand{\arraystretch}{1.6}
   \begin{tabular}{lll}
    \hline
    Algorithm & Assumptions on buyer valuations & Time complexity  \\ \hline
    \citet{hartline2005near} & --  & $\bigOtilde(2^N \eps^{-N})$    \\ \hline
    \citet{chawla2022pricing} & Monotonicity & $N^{\bigO\rbr{\eps^{-2}\log\eps^{-1}}}$  \\ \hline
    \multirow{3}{*}{\citet{chen2024learning}} 
    & Monotonicity, $K$ types & $\bigOtilde(N^K \eps^{-K}) $ \\ \cline{2-3}
    & Monotonicity, $K$ types, smoothness & $\bigOtilde\left( \eps^{-2K}  \right)$ \\ \cline{2-3}
    & Monotonicity, $K$ types, diminishing return & $\bigOtilde( \eps^{-3K}\log^K(N) )$ 
    \\ \hline
\end{tabular}
    \vspace{0.05in}
    \caption*{\footnotesize Table 1: Algorithms from prior work applicable for pricing ordered items, along with their assumptions on buyer valuations and time complexities to obtain an $|\buyers|\bigO(\eps)$-optimal solution when there are $N$ total points.
    \citet{chawla2022pricing} were the first to study this problem, assuming monotonic buyer valuations.
    \citet{chen2024learning} studied this problem in the context of data pricing, introducing additional assumptions on buyer valuations, such as finite buyer types, smoothness, and
    diminishing returns. 
    }\label{tb:approxresults}
\end{table*}
}

\newcommand{\insertOIPAlgo}{%

\begin{algorithm}[t]
    \caption{\label{alg:oipalgo}}
    \begin{algorithmic}[1]
    \State {\bfseries Input:} Buyer valuations $\{\valj\}_{j\in\buyers}$, contributor costs
    $\{\costi\}_{i\in\contributors}$, 
    an algorithm for ordered item pricing $A$, 
    approximation parameter $\eps$.
    \State \textbf{for} $N\in \{1,2,\dots  \frac{|\buyers|}{\cost_1}\}$:
    \State \quad\quad  $\itemprice_N \leftarrow A(\{\valj\}_{j\in\buyers}, N, \eps)$.
        \Comment{Execute $A$ to obtain an $|\buyers|\bigO(\eps)$-optimal pricing curve.}
    \State \quad\quad $\mmNj \leftarrow \mjp(N, \itemprice_N)$ for all $j\in\buyers$.
                \Comment{Compute buyer dataset sizes. See~\eqref{eqn:buyerpurchasemodel}.}
    \State \quad\quad ${\rm profit}_N \leftarrow \sum_{i\in\buyers} \itemprice_N(\mmNi) - \cost_1 N$.
                \Comment{Compute the profit if collecting $N$ data.}
    \State $\numdataA \leftarrow \argmax_{N\in \left\{1,2,\dots  \frac{|\buyers|}{\numdata}\right\}} {\rm profit}_N$.
    \Comment{Find the optimal data collection amount.}
    \State $\mmjA \leftarrow \mm_{\numdataA, j}$, \quad $\pricejA \leftarrow \itemprice_{\numdataA}(\mmjA)$.
    \Comment{Optimal prices and dataset sizes for buyer.}
    \State \textbf{Return: } $\numdataA$, $\{\mmjA\}_{j\in\buyers}$, $\{\pricejA\}_{j\in\buyers}$
\end{algorithmic}
\end{algorithm}

}

\newcommand{\insertAlgoMain}{%
    \begin{algorithm}[H]
 
    \caption{\label{alg:mechanism}}
   \begin{algorithmic}[1]
        \State
        {\bfseries Input:} Buyer valuations $\{\valj\}_{j\in\buyers}$, contributor costs $\{\costi\}_{i\in\contributors}$, algorithm for ordered item pricing $A$, approximation
            parameter $\eps$. 
        \State
        \textbf{The broker: }
        \State \hspace{0.6cm} $\numdataA$, $\{\mmjA\}_{j\in\buyers}$, $\{\pricejA\}_{j\in\buyers}$ \leftarrow \text{Execute Algorithm~\ref{alg:oipalgo} with $\{\valj\}_{j\in\buyers}$, $\{\costi\}_{i\in\contributors}$, $A$, $\eps$.  }$
        \State \hspace{0.6cm} Instruct contributors to collect $\{\nnopti\}_{i\in\contributors}$ where,
        \[
                \nnopt_1 = \numdataA-1, \quad \nnopt_2 = 1, \quad \nnopt_j = 0 \text{ for all } j\in \{3,\dots,|\contributors|\}.
        \]
      
        \State
        \textbf{Each contributor $i \in \contributors$:}
        \State
        \hspace{0.6cm}
         Chooses a strategy $\strati = (\colamnti, \subfunci)$. 
        \State 
        \hspace{0.6cm}
         Collects $\nni$ data points $\initdatai$, and submits $\subdatai = \subfunci(\initdatai)$ . 
        \State
        \textbf{The broker:}
        \State 
        \hspace{0.6cm}   Receive  $\{\subdatai\}_{i\in\contributors}$ from contributors.
        \State \hspace{0.6cm} Let   $Y \leftarrow \subdata_1 \cup \subdata_2$ be the combined dataset of the first two contributors.
        \State
        \hspace{0.6cm} Allocate  $\mmoptj$ ($=\mmjA$) randomly chosen points from $Y$ to each buyer $j \in \buyers$.
        \State \hspace{0.6cm} Charge $\priceoptj$ from each buyer $j\in\buyers$ (see~\eqref{eq:buyer_pay}).

        \State
        \hspace{0.6cm} Pay $\payopti$ to each contributor $i \in \contributors$ (see~\eqref{eq:contrib_pay}).

   \end{algorithmic}   
\end{algorithm}
}


\begin{abstract}
We study a data marketplace where a broker intermediates between buyers, who seek to estimate the mean \(\mu\) of an unknown normal distribution \(\Ncal(\mu, \sigma^2)\), and contributors, who can collect data from this distribution at a cost. 
The broker delegates data collection work to contributors, aggregates reported datasets, sells it to buyers, and redistributes revenue as payments to contributors.
We aim to maximize welfare or profit under key constraints: individual rationality for buyers and contributors,  incentive compatibility (contributors are incentivized to comply with data collection instructions and truthfully report the collected data), and budget balance (total contributor payments equals total revenue).
We first compute welfare/profit-optimal prices under truthful reporting; however,
to incentivize data collection and truthful data reporting, we adjust them based on discrepancies in contributors' reported data.
This yields a Nash equilibrium (NE) where the two lowest-cost contributors collect all data. We complement this with two impossibility results: \emph{(i)} no nontrivial dominant-strategy incentive-compatible mechanism exists in this problem, and \emph{(ii)} no mechanism outperforms ours in a NE.
\end{abstract}

\section{INTRODUCTION}

With the ubiquity of AI, data has become a critical economic resource, driving operations and innovation across many domains.
However, not everyone has the means to collect data on their own
 and thus must rely on \emph{data marketplaces} to acquire the data they need.
In recent years, data marketplaces have emerged across various domains,
including 
materials science~\cite{citrine},
advertising~\cite{googleads},
computer systems~\cite{azuredatashare},
insurance~\cite{datarade},
freight~\cite{freightdat},
logistics~\cite{bloombergeap}, 
and others~\citep{awsdataexchange,ibmdatafabric,snowflake}. 

In data marketplaces, strategic data contributors aim to minimize their data collection costs while maximizing compensation from buyers.  
This can lead to behaviors such as under-collecting data to reduce costs, or misreporting their collected datasets to increase payments.  
Such actions undermine trust in buyers, who seek high-quality data for their learning tasks.  
These considerations motivate the central goal of this paper: design data marketplaces that \emph{disincentivize strategic contributor behavior}.



Most prior work on data marketplaces assume that data contributors will
always truthfully report data, and focus on
setting prices for buyers and payments to contributors based on the \emph{quantity of data}.
This overlooks the fact that buyers ultimately care about performance on a given learning task, and pricing based on quantity alone is meaningless when strategic contributors can report data untruthfully.
For instance, naively paying contributors based on the size of the dataset they contribute, incentivizes them to report large fabricated (fake) datasets to dishonestly inflate earnings.
Prior work on incentivizing truthful data reporting do not consider broader market constraints (e.g. contributor payments must originate from buyers who derive value from the data they purchase).


\subsection{Summary of Contributions}
\label{sec:contributions}

In this work, we study these fundamental challenges in a normal mean estimation setting.  

\paraheadwnogap{Problem Formalism} 
In \S\ref{sec:setup}, we formalize the problem set up. 
A set of contributors~\(\contributors\) act as the \emph{agents}, each able to collect i.i.d.\ samples from an \emph{unknown} normal distribution \(\Ncal(\mu, \sigma^2)\), each incurring a per-sample cost~\(\costi\). 
We assume these costs are public information. A set of buyers~\(\buyers\) seek to estimate~\(\mu\) by purchasing data from the contributors.  
A broker serves as the \emph{principal}, who facilitates these transactions by delegating data collection work to contributors (which may not be followed), and receiving data submissions (which may be reported untruthfully).
The broker then sells subsets of the received data to buyers and redistributes the revenue to contributors.


\subparahead{Mechanism design objectives} We consider two objectives: welfare and contributors’ profit. Welfare is defined as the total value that buyers derive from the allocated data minus contributors’ total data-collection costs. Revenue is the total payment collected from buyers. Contributors’ profit is then defined as revenue minus contributors’ total data-collection costs. The broker itself is not profit-seeking and only serves as the mechanism operator, its objective is to maximize either welfare or profit, while satisfying key market constraints:
\emph{(i) Budget balance (BB):} total contributor payments equal total revenue from buyers.  
\emph{(ii) Individually rational for contributors (IRC):} contributor payments cover data collection costs.  
\emph{(iii) Incentive-compatible for contributors (ICC):}
contributors are incentivized to  behave well, \ie  follow the broker's instructions on data collection, and report the collected data truthfully;
specifically, being well-behaved constitutes a Nash equilibrium (NE) for all contributors.
\emph{(iv) Individually rational for buyers (IRB):} buyers' prices should not exceed their valuation.

\parahead{Impossibility Results}
In~\S\ref{sec:hardness}, we establish two key impossibility results.
First, there exists no nontrivial dominant-strategy incentive-compatible (DSIC) mechanism for this problem,
i.e. collecting the specified amount and reporting it truthfully is the best strategy regardless of others' strategies.
Second,  in any NE of any mechanism, the maximum achievable profit is  $\OPT - (\cost_2 - \cost_1)$ where $\cost_1,\cost_2$ are the costs of the cheapest and second cheapest agents, and
$\OPT$ is a welfare/profit-optimal baseline in which contributors are not strategic.
Both results stem from a key insight: when others collect no data, an agent can fabricate data at no cost from an any normal distribution.

\parahead{Mechanism Design}
Next, we design mechanisms that satisfy the four requirements outlined above while maximizing either the welfare (\S\ref{sec:mechanism_cost_known})
or profit (\S\ref{sec:profit_max}, Appendix~\ref{app:profit_maximization}).

\subparahead{Key algorithmic insights and analysis}
To minimize costs, one might expect the broker to delegate all data collection to the contributor with the lowest cost.  
However, this approach is not ICC, as a sole contributor could fabricate data, and the broker---having no other data source---would be unable to detect it.  
We show, however, that two agents suffice, as  
 the broker can assess each agent's submission by comparing it against the other's.
Thus, we assign data collection to the two cheapest contributors, allocating nearly all points to the cheapest, and just one point to the second-cheapest, whose primary role is to enable the broker to verify the submission of the cheapest.


Our mechanisms first determine the total data collection amount, buyers' dataset sizes and prices, and contributor payments to maximize welfare/profit, assuming well-behaved contributors.  
However, buyers' actual prices include an additional term based on the difference between the means of the two contributors' reported datasets, and this variation is reflected in the contributor payments.  
This design ensures ICC, as each contributor is incentivized to report truthfully when the other does, in order to minimize the discrepancy.  
This mechanism achieves a welfare (profit) which matches (nearly matches) the 
$\OPT - (\cost_2 - \cost_1)$ upper bound proved in our impossibility result in~\S\ref{sec:hardness}.

\vspace{-5pt}




\subsection{Related Work}
\label{sec:relatedwork}

\vspace{-5pt}

\parahead{Incentivizing Data Collection}
Prior work has studied methods to incentivize data collection, both
via payments in principal-agent models ~\citep{cai2015optimum,cummings2015accuracy,fallah2024optimal}, and via reciprocal data sharing in collaborative learning settings~\citep{karimireddy2022mechanisms,huang2023evaluating,sim2020collaborative,blum2021one}.
However, they assume agents will always truthfully report the collected data. 
Indeed,
in many cases, they are able to design DSIC mechanisms for contributors, which, as we demonstrate, is impossible in our problem.
Moreover, some methods assume specific forms for the valuation of the data for the principal. In contrast, our approach is more general: we only assume that buyers' valuation decreases with their estimation error.


\parahead{Incentivizing Truthful Data Reporting}
Previous work on incentivizing truthful reporting of already collected data~\citep{zheng2024truthful,chen2020truthful,ghosh2014buying} do not study many of the market constraints we study here, such as efficiency, IRB, and IRC. In some of these models, agents do not incur costs to collect data.
Moreover, they assume that the  principal has access to a large budget for incentivization, but do not address the source of these funds. In contrast, our setting requires that payments originate from buyers who derive value from the data contributed by contributors.
Other related work~\cite{cummings2015accuracy} assumes that agents submit unbiased data and only add noise under a variance constraint; however, this restriction on agents' strategic behavior is too strong for practical settings.
\citet{chen2018optimal} investigate truthful mechanisms for obtaining data from strategic agents who have different costs; however, these agents can only misreport their costs and not the actual data. \citet{miller2005eliciting} introduced peer prediction for eliciting truthful reports without ground truth, focusing purely on report incentives; our mechanism additionally ensures market feasibility, accounting for contributors’ data collection costs and other key market constraints.

\subparahead{Truthful Contributions in Collaborative Mean Estimation}
Recent work has explored collaborative mean estimation, where a group of strategic agents collaborate to estimate the mean of a  distribution~\citep{chen2023mechanism,clinton2024collaborative,dorner2023incentivizing}. A common technique in our method and these works is comparing an agent’s reported data mean against the mean of other agents to incentivize truthful reporting. However, our analysis techniques are different from these methods, in part because we must adhere to market constraints that do not arise in payment-less collaborative learning settings, and in part because all three methods assume specific forms for the agents' valuation of data.

\subparahead{Moral Hazard}
The issue of contributors submitting fabricated data can also be analyzed under the framework of \emph{moral hazard}, where agents (contributors) deviate from agreed-upon tasks (data collection) with a principal (broker)~\citep{laffont2009theory,holmstrom1979moral,mirrlees1999theory,helpman1975moral}. As in classical moral hazard settings, the outcome of a contributor’s work (the collected dataset) is stochastic. However, unlike typical models, our broker lacks knowledge of the signal distribution given contributors' effort (the learning problem would be trivial if the data distribution were known) and cannot directly observe the outcomes of these efforts, relying instead on potentially dishonest reports from the contributors themselves.

\parahead{Data Pricing}
Our approach to profit maximization leverages an extensive body of work on data pricing~\citep{agarwal2020towards,agarwal2019marketplace,bergemann2019markets,chen2023equilibrium,bergemann2018design,babaioff2012optimal,mehta2021sell,pei2020survey}.
Of these, pertinent to us, is a line of work on \emph{pricing ordered items}, 
where unit-demand buyers share a common preference ranking over goods~\citep{chawla2022pricing,hartline2005near,chen2024learning}. 
We will use algorithms from these works when studying envy-free profit maximization.



\section{PROBLEM SET UP}
\label{sec:setup}

We now present the problem set up.
We describe the marketplace environment in~\S\ref{sec:env}, followed by outlining the mechanism design problem in~\S\ref{sec:mechdesignproblem}. A table of notations is provided in Appendix~\ref{sec:notation}.

\subsection{Description of the Marketplace}
\label{sec:env}


A data marketplace involves three key actors:  
\emph{(i)} a finite set of \emph{strategic} data contributors $\contributors = \{1,\dots,|\contributors|\}$,  
\emph{(ii)} a finite set of \emph{non-strategic} buyers $\buyers$, and  
\emph{(iii)} a trusted broker who facilitates transactions between contributors and buyers.
Contributors collect samples from a \emph{common} normal distribution $\Ncal(\mu, \sigma^2)$.
Here, $\sigma^2$ is known. Neither the contributors, buyers, or the broker know $\mu$, nor do they have any ancillary information, such as a prior.
Buyers wish to estimate $\mu$ by purchasing data from the contributors, but have different valuations based on estimation errors.
Contributor $i$ incurs a cost $\costi$ to collect each sample, where $\costi$ is \emph{known to the broker}.


\parahead{Interaction Protocol}
The marketplace operates in the following sequence:
\vspace{-0.1in}
\begin{enumerate} [leftmargin=0.2in,itemsep=-0.02in]

    \item \emph{Mechanism design (broker)}:
    \label{itm:mechanismdesign}
    The broker designs and publishes a mechanism $M$ to delegate data collection work to contributors, elicit the collected data, set prices, sell data to buyers, and redistribute revenue back to contributors.

    \item \emph{Delegation of work (broker)}:
    The broker requests each contributor $i$ to collect $\reqamntri = \reqamnti(\{\costi\}_{i\in\contributors}) \in \NN$ samples, where $\reqamnti$ maps the costs to requested quantities.

    \item \emph{Data collection (contributors)}:
    \label{itm:datacollection}
    Each contributor $i$ collects $\colamntri = \colamnti(\reqamntri, \costi) \in \NN$ i.i.d.\ samples at cost $\cost_i \colamntri$, yielding a dataset $\initdatai = \{z_{i,1}, \dots, z_{i,\colamntri}\} \in \RR^{\colamntri}$.
    The function $\colamnti$ maps the broker's request and the contributor’s cost to the actual number of samples  she will collect (which may not be the amount requested).

    \item \emph{Data submission (contributors)}:
    \label{itm:datasubmission}
    Each contributor submits a dataset $\subdatai = \{x'_{i,1}, x'_{i,2}, \dots\} = \subfunci(\initdatai, \reqamntri, \costi) \in \bigcup_{\ell=0}^\infty \RR^\ell$.
    The submission function $\subfunci$ may modify the collected dataset $\initdatai$, enabling strategic behaviors (\eg fabrication) for personal gain.

    \item \emph{Data allocation and purchases (broker)}:
    \label{itm:pricing}
    The broker allocates to each buyer $j \in \buyers$ a random subset of size $\mmrj = \mmj(\{\subdatai\}_{i\in\contributors}, \{\costi\}_{i\in\contributors}) \in \NN$ drawn from $\bigcup_{i \in \contributors} \subdatai$, and charges a price $\pricerj = \pricej(\{\subdatai\}_{i\in\contributors}, \{\costi\}_{i\in\contributors}) \in \RR$.
    Let $\receiveddata_j$ be the dataset received by buyer $j$, obtained by sampling a random subset of size $\mmrj$ from the union of all contributors' datasets $\bigcup_{i\in\contributors}\subdatai$.

    \item \emph{Revenue redistribution (broker)}:
    \label{itm:revenueredistribution}
    The broker redistributes total revenue $\sum_{j \in \buyers} \pricerj$ to contributors via payments $\{\payri\}_{i \in \contributors}$, where $\payri = \payi(\{\subdatai\}_{i\in\contributors}, \{\costi\}_{i\in\contributors}) \in \RR$, and each payment function $\payi$ maps available information to contributor $i$'s payment.

\end{enumerate}

\subparaheadwnogap{Notation}
Primed symbols denote realizations of quantities that are functions of information available to a contributor or broker, \eg in step~\ref{itm:datacollection}, 
the function $\colamnti$ maps the requested amount and cost to the actual amount contributor $i$ will collect, denoted $\colamntri$.

\parahead{Mechanism}
Let $\dataspace = \bigcup_{\ell=0}^\infty \RR^\ell$ denote the space of datasets.
A broker's mechanism is specified by the tuple of (possibly randomized) functions
$M = \left(\{\reqamnti\}_{i\in\contributors}, \{\mmj\}_{j \in \buyers}, \{\pricej\}_{j \in \buyers}, \{\payi\}_{i \in \contributors}\right)$.
Here, $\reqamnti: \RR_+^{|\contributors|} \rightarrow \NN$ determines the amount of data to request from contributor $i$ based on their costs. 
Next, $\mmj: \dataspace^{|\contributors|} \times \RR_+^{|\contributors|} \rightarrow \NN$ determines the size of the dataset sold to buyer~$j$;
it should satisfy $\mmj(\{\subdatai\}_{i\in\contributors}, \cdot)\leq\sum_{i\in\contributors}|\subdatai|$ for all $j\in\buyers$, as the broker cannot sell more data to any buyer than she has received from contributors.
Next,
\,$\pricej : \dataspace^{|\contributors|}  \times \RR_+^{|\contributors|} \to \RR$ specifies the price charged to buyer~$j$.
Finally,  
$\payi : \dataspace^{|\contributors|} \times \RR_+^{|\contributors|} \to \RR$ determines the payment to contributor~$i$.

\parahead{Contributors}
Without loss of generality, we assume contributors are ordered by increasing cost: $\cost_1 \leq \cost_2 \leq \dots \leq \cost_{|\contributors|}$.
After the mechanism is published, each contributor $i$ selects a strategy, which is a tuple of (possibly randomized) functions $\strati = ( \colamnti, \subfunci)$, where:  
$\colamnti: \NN  \times \RR_+ \to \NN$ determines how much data to collect, based on the broker's request and the cost; and  
$\subfunci: \dataspace \times \NN \times \RR_+ \to \dataspace$ determines what dataset to submit, possibly modifying or fabricating the collected dataset.
This formalism accommodates a wide range of strategic behavior, including under-collecting data 
to reduce data collection costs (\ie $\colamntri = \colamnti(\reqamntri, \costi) < \reqamntri$), and fabricating data to appear compliant 
(\ie $|\subdatai| = \reqamnti > |\initdatai| = \colamntri$, where $\subdatai = \subfunci(\initdatai, \reqamnti,  \costi)$).

\subparahead{Well-behaved contributor}
Our goal will be to design mechanisms that incentivize contributors to be ``well-behaved'', \ie truthfully report their costs, collect the requested amount of data, and submit it without alteration.
To formalize this, let $\identity$ denote the identity function, which returns its first argument: 
$\identity(x, y, z, \dots) = x$.  
Define the well-behaved strategy for contributor~$i$ as $\stratopti = (\identity, \identity)$.
Let $\stratopt$  ($\stratoptmi$) denote the strategy profile in which all contributors (all except $i$) are well-behaved.

\subparahead{Contributor utility} 
If a contributor samples $\colamntri$ points and receives payment $\payri \in \RR$, her \emph{ex post} utility is $\payri - \costi \colamntri$.
As these quantities depend on the mechanism $M$ and all contributors' strategies $\strat$,
her \emph{ex ante} utility $\utilci$
can be defined as\footnote{%
Making the dependencies on the mechanism and strategies explicit, 
we can write this as
$\utilci(\mech,\strat) \defeq  \inf_{\mu\in\RR} \EE\left[ \payi\big(\{\subfuncj(\initdataj)\}_{j\in\contributors}, \{\costj\}_{j\in\contributors}\big)
                - \costi \colamnti\left(\reqamnti( \{\costj\}_{j\in\contributors}), \costi\right)\right]$.
To reduce notational clutter, we will use the primed notation going forward.
},
\vspace{-0.05in}
\begin{align*}
                \numberthis\label{eqn:utilc}
    \utilci(\mech, \strat) &\defeq \inf_{\mu\in\RR} \EE\left[ \payri - \costi \colamntri\right]. \\[-0.23in]
\end{align*}
Here, the expectation is over stochasticity in data generation, and any randomness in the mechanism and
contributor strategies. 
We consider the infimum (worst-case) value over $\mu$ as $\mu$ is unknown, and a process should yield reliable estimates across all $\Ncal(\mu,\sigma^2) $. To illustrate further, suppose the true mean is $\mu=\mutilde$.
Suppose a contributor chooses not to collect any data, $\colamntri=0$ (incurring zero cost), and reports a large vector of repeated copies of $\mutilde$, i.e., $\subdatai = (\mutilde, \dots, \mutilde)$.
This benefits buyers, as they may obtain arbitrarily accurate estimates of $\mu$ using the sample mean.
As a result, buyers would be willing to pay more, which increases the contributor’s payment without incurring any cost.
However, this strategy is viable only if the contributor knew \emph{a priori} that $\mu = \mutilde$.
Taking the infimum over $\mu$ accounts for the fact that $\mu$ is unknown and makes $\utilci$ well-defined\footnote{An alternative approach to address uncertainty about $\mu$ 
is to place a prior over $\mu$ and take an expectation with respect to that prior.
While we do not study this Bayesian setting, our high-level ideas can be adapted to it.}.

\parahead{Buyers}
Buyer $j$ receives a dataset $\receiveddata_j$ from the mechanism, pays the specified price $\pricerj$ to the broker, and estimates $\mu$ via the sample mean $\muhat(\receiveddata_j)$.
While prior work on data marketplaces has often modeled buyer valuation as a function of dataset size~\citep{zhang2023survey}, such a model is inadequate when contributors may misreport data.  
A more principled approach, rooted in classical statistics, measures performance by the estimation error $|\muhat(\receiveddata_j)-\mu|$~\citep{stein1981estimation}.  
Motivated by this, 
we adopt a more appropriate model where a buyer’s valuation depends on the estimation error $|\muhat(\receiveddata_j) - \mu|$, which directly reflects the buyer’s ultimate objective. We provide additional background on normal mean estimation in Appendix~\ref{app:mean_intro}.

\begingroup
\allowdisplaybreaks
\subparahead{Buyer valuations}
We will find it useful to define three related valuation functions for buyers:
\begin{enumerate}[leftmargin=0.2in,itemsep=-0.015in]
    \vspace{-0.1in}
    \item \emph{Error-based valuation $\valfuncj$:}
    Different buyers may have differing valuations for varying errors, which we model using buyer-specific valuation functions $\valfuncj: \RR_+ \rightarrow [0,1]$.
    Here, $\valfuncj(e)$ is buyer $j$'s \emph{ex-post} value when her estimation error is $e$. We adopt general error-based valuation functions $\valfuncj$ to capture heterogeneity across buyers. In real-world markets, buyers often differ in their sensitivity to estimation errors: some may place high value on small accuracy improvements, while others may tolerate larger errors. Allowing $\valfuncj$ to be buyer-specific better reflects diverse buyer objectives.
    The next two definitions build on $\valfuncj$.

    \item \emph{Valuation under a mechanism and strategy profile $\valmechj$:}
    Recall that buyer~$j$'s dataset~$\receiveddata_j$ is a random subset of size~$\mmrj$ drawn from all contributors' submissions (step~\ref{itm:pricing} in interaction protocol).  
    It depends on the mechanism~$M$, the contributor strategy profile~$\strat$, and the stochastic datasets~$\{\initdatai\}_{i \in \contributors}$ collected by the contributors.  
    Thus, buyer~$j$'s value under mechanism~$M$ and strategy profile~$\strat$ is:
    \vspace{-4pt}
    \begin{align*}
        \numberthis
        \label{eqn:valb}
        \valmechj(M, \strat)
        \defeq 
            \inf_{\mu\in\RR} \EE
                \left[ \valfuncj\left( 
           \left|\,\muhat(\receiveddata_j)
           \,-\, \mu\,\right| \right)
           \right].\\[-0.25in]
    \end{align*}

\vspace{-4pt}
    \item
        \begingroup
    \allowdisplaybreaks
    \emph{Valuation under clean data $\valdataj$:}
    As a benchmark, we also define a size-based valuation $\valdataj$, where $\valdataj(m)$ is buyer $j$'s expected value upon receiving $m$ i.i.d.\ (clean) samples:
    \vspace{-0.05in}
    \begin{align*}
        \numberthis
        \label{eqn:valdataj}
        \valdataj(m) = \inf_{\mu\in\RR} \EE_{X\sim\Ncal(\mu,\sigma^2)^m}\left[
            \valfuncj\left(\left|\,\muhat(X) - \mu\,\right|\right)
        \right].\\[-0.2in]
    \end{align*}
    \vspace{-0.5in}
    \endgroup
\end{enumerate}
\endgroup

    We will assume that $\valfuncj$ is such that $\valdataj(m)$ is increasing in $m$, \ie if a buyer receives more clean data, her value increases in expectation.
As in~\eqref{eqn:utilc}, the infimum in~\eqref{eqn:valb} and~\eqref{eqn:valdataj} ensures $\valmechj$ and $\valdataj$ are well-defined as $\mu$ is unknown.

\subparahead{Buyer utility}
A buyer's utility is her valuation minus her payment.
Defining this is somewhat subtle, as infima and expectations do not commute.
We consider two candidate definitions:
\begin{align}
\hspace{-0.07in}
    &\utilbj(M, \strat) \defeq
        \valmechj(\mech, \strat) - \sup_{\mu\in\RR} \EE\left[ \pricerj \right].
    \hspace{-0.03in}
    \label{eqn:utilbone} \\
    &\utilbj(M, \strat)
        \defeq
        \inf_{\mu\in\RR} \EE\left[ \valfuncj\left( 
        \left|\muhat(\receiveddata_j) - \mu\,\right|
       \right)
       - \pricerj
       \right].
    \hspace{-0.03in}
    \label{eqn:utilbtwo}
\end{align}
Here,~\eqref{eqn:utilbone} measures the gap between the buyer’s worst-case expected value~\eqref{eqn:valb} and her worst-case expected price,
whereas~\eqref{eqn:utilbtwo} computes the buyer’s \emph{ex-post} utility $\valfuncj(e) - \pricej$ for realized error $e$, and price $\pricej'$ followed by a worst-case expectation.
Clearly,~\eqref{eqn:utilbone} $\leq$ \eqref{eqn:utilbtwo}, so guarantees established for~\eqref{eqn:utilbone} also hold for~\eqref{eqn:utilbtwo}.
Accordingly, we adopt~\eqref{eqn:utilbone} as our measure of buyer utility.
That said, as we will see, in our mechanism, both expressions will be equal, as the price is independent of $\mu$.

\subparahead{Non-strategic buyers, known buyer valuations}
As we focus on contributor-side challenges, we assume buyers are non-strategic and that the broker knows buyer valuations~$\{\valfuncj\}_{j \in \buyers}$.  
Extensions to unknown valuations---elicited via bids or learned in repeated markets---is left for future work.


\parahead{Public Information}
We assume that the set of contributors $\contributors$, buyers $\buyers$, buyer valuations $\{\valfuncj\}_{j \in \buyers}$, and the variance $\sigma^2$ are public information known to all parties. Contributors’ costs $\{\costi\}_{i\in\contributors}$ are known only to the contributors themselves and the broker. A contributor’s strategy and the broker’s mechanism may depend on this information, but we suppress this dependence for simplicity.

\subsection{The Mechanism Design Problem}
\label{sec:mechdesignproblem}

\paraheadwnogap{Market Constraints}
We wish to design a mechanism 
in which buyers and contributors do not lose by participation, contributors are incentivized to behave well, and payments to contributors equal the revenue. We now formalize these desiderata:
\vspace{-0.1in}
\begin{enumerate} [leftmargin=0.2in,itemsep=-0.02in]
\label{mech:requirements}

\item \emph{Budget balance (BB):}
$\mechopt$ satisfies budget balance if the total payments to contributors
is equal to the total revenue, i.e., $\sum_{i\in\contributors} \payri = \sum_{j\in\buyers} \priceri$.


\item \emph{Individually rational for contributors (IRC):}
$M$ is \emph{ex-ante} individually rational for the contributors if $\utilci(\mechopt, \stratopt) \geq 0$ for all contributors $i\in\contributors$.

\item \emph{Incentive-compatible for contributors (ICC):}
$\mechopt$ is incentive-compatible for contributors if $\stratopt$ is a Nash equilibrium. That is,  for all contributors $i\in\contributors$, and  for every alternative strategy $s_i$ for  $i$, we have $\utilci(\mechopt, \stratopt) \geq \utilci(\mechopt, (s_i, \stratoptmi))$.
(In~\S\ref{sec:hardness}, we show that no DSIC mechanisms are possible).


\item \emph{Individually rational for buyers (IRB):}
$\mechopt$ is \emph{ex ante}  individually rational for the buyers if
$\utilbj(\mechopt, \stratopt) \geq 0$ for all buyers $j\in\buyers$.

\end{enumerate}

\parahead{Efficiency}
A mechanism should satisfy some notion of efficiency while meeting the above criteria, where common objectives include welfare and profit maximization.  
Our contributor-side mechanism supports both.  
Here, we focus on welfare maximization, which is simpler to implement on the buyer side.
In~\S\ref{sec:profit_max} and
Appendix~\ref{app:profit_maximization}, we extend these ideas to profit maximization.

First, let us define the welfare  $\welf(\mech, s; \mu)$ of a mechanism $M$ under a strategy profile $\strat$:
\begin{align*}
    & \welf(M, s; \mu) =
        \EE\bigg[ \sum_{j\in\buyers}\valmechj(M, s) - \sum_{i\in\contributors} \costi \colamntri \bigg].
    \numberthis
    \label{eqn:welfare}
\end{align*}
For an arbitrary mechanism, the welfare may depend on reported data, which itself depends on the actual distribution $\Ncal(\mu,\sigma^2)$.
However, we will see that for our mechanism, under $\stratopt$, the welfare $\welf(\mechopt, \stratopt; \mu)$ is independent of $\mu$.

\subparahead{A baseline for welfare maximization}
Our approach is to define a welfare-optimal benchmark $\OPT$ and design a mechanism that satisfies the four desiderata while achieving welfare close to $\OPT$. A natural baseline is the maximum welfare achievable when contributors are non-strategic and follow $\stratopt$. In this case, since buyers receive i.i.d.\ data, their valuations depend only on the data amounts $\{\mmrj\}_{j\in\buyers}$ (see~\eqref{eqn:valdataj}). Additionally, the data allocated to any buyer cannot exceed the total collected: $\mmrj \leq \sum_{i\in\contributors} \colamntri = \sum_{i\in\contributors} \reqamntri$. These observations define the baseline $\OPT$:

\vspace{-10pt} 

\begingroup
\setlength{\abovedisplayskip}{5pt}
\setlength{\belowdisplayskip}{4pt}
\allowdisplaybreaks
{\small \vspace{-10pt} 
\begin{align*}
    \OPT &= \max_{M }
            \EE\left[ \left(\;\sum_{j\in\buyers}\valj(M, \stratopt) \;-\; \sum_{i\in\contributors} \costi \reqamntri  \;\right) 
                \right] \\ &
              \overset{(a)}{=}  \max_{\{\reqamntri\}_i, \{\mmrj\}_j}
                \left(\sum_{j\in\buyers} \valdataj(\mmrj) - \sum_{i\in\contributors} \costi \reqamntri \right)
                \\
    &= \max_{\{\reqamntri\}_{i} } \left(\max_{\{\mmrj\}_{j}}\;
        \sum_{j\in\buyers}\valdataj(\mmrj) \;-\; \sum_{i\in\contributors} \costi \reqamntri  \;\right) 
    \\
    &\overset{(b)}{=} \max_{\{\reqamntri\}_{i} } \left(\; \sum_{j\in\buyers} \valdataj\left(\sum_{i\in\contributors} \reqamntri\right) 
    \;-\; \sum_{i\in\contributors} \costi \reqamntri \; \right)\\&
   \overset{(c)}{=} \;\sum_{j\in\buyers} \valdataj\left(\reqamntoneopt\right) - \costone\reqamntoneopt,
    \numberthis
    \label{eqn:blwelfare}
\end{align*}%
}%
\endgroup%

where $\reqamntoneopt = \argmax_{N' \in \NN} \sum_{j \in \buyers} \valdataj(N') - \costone N'$. Here, (a) follows from~\eqref{eqn:valdataj}, the fact that welfare depends only on the quantities $\{\reqamntri\}_i, \{\mmrj\}_j$, not on payments or prices, and that $\reqamntri$ does not depend on the reported data. As $\valdataj$ is non-decreasing, step (b) sets $\mmrj = \sum_{i \in \contributors} \reqamntri$ to maximize $\valdataj(\mmrj)$ under the constraint $\mmj \leq \sum_{i \in \contributors} \reqamntri$, reflecting that welfare is maximized when all collected data is allocated to all buyers. Step (c) assigns all data collection to contributor 1, the lowest-cost contributor, to minimize total cost.



\parahead{Some Remarks}  
We conclude this section with three observations.
\emph{(1) Buyer valuation for untruthful data:} We do not explicitly model the fact that truthful data maximizes buyer valuations. Doing so would require additional assumptions on $\vali$ and an analysis of optimal estimators. Instead, we define a baseline where \emph{non-strategic} contributors follow the rules, ensuring buyers receive truthful data. We aim to design a mechanism that incentivizes strategic contributors to also follow the rules while approximating the baseline.
\emph{(2) Negative prices/payments:} Since data is random, data-dependent prices and payments are also random.  
We allow \emph{ex post} negative prices (broker pays buyers) and payments (contributors pay broker). Although unconventional, this enables strong ICC guarantees. Notably, in our mechanism, the \emph{ex-ante} (i.e. expected) prices and payments are always non-negative.
\emph{(3) On BB:} Payments cannot exceed revenue to ensure market feasibility.
We additionally require equality to ensure that trade benefits are fully distributed to contributors, leaving no unallocated excess revenue.
Unlike the other requirements, BB must hold always and not just in expectation, as the market must be feasible for every realization of data.

\section{IMPOSSIBILITY RESULTS}
\label{sec:hardness}

We first present two impossibility results for this problem. Theorem~\ref{thm:dsic} states that no nontrivial dominant-strategy incentive-compatible mechanism exists, motivating our focus on NE in the ICC guarantee. Theorem~\ref{thm:neup}  establishes an upper bound on the maximum achievable welfare in any NE of any mechanism that ensures truthful data reporting. 
The proofs of both theorems are given in Appendix~\ref{app:hardnessproofs}.

\begin{restatable}{thm}{DSIC}
    \label{thm:dsic}
    For any mechanism $\mech$, if $\strat=\cbr{(   \colamnti ,\subfunci)}_{i\in\contributors}$ is a dominant strategy profile, then no contributor collects any data, \ie $\forall i\in\contributors$, $\colamntri = \colamnti(\reqamntri, \costi) =0$.
\end{restatable}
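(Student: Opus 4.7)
The plan is to exploit the adversarial infimum over $\mu$ in the contributor utility~\eqref{eqn:utilc}: against a carefully chosen background profile, contributor $i$ can match, in worst case over $\mu$, whatever expected payment her dominant strategy $\strati = (\colamnti, \subfunci)$ earns, by fabricating data from any normal distribution $\Ncal(\tilde\mu, \sigma^2)$ of her choice, and at no collection cost. Since $\strati$ must be at least as good as this no-cost alternative, and the two expected payments agree (as a function of $\tilde\mu$), the only way $\strati$ can be dominant is if the collection cost $\costi \EE[\colamntri]$ is zero, forcing $\colamntri = 0$ almost surely.

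Concretely, fix $i \in \contributors$ and consider the background profile, call it $\stratmi^0$, in which every other contributor $j \neq i$ collects zero data and submits the empty dataset; by dominance, $\strati$ must be a best response to $\stratmi^0$. Against $\stratmi^0$, the broker receives only $\subdatai = \subfunci(\initdatai, \reqamntri, \costi)$, so the payment $\payri$ is a function of $\subdatai$ and the public costs alone. For any $\tilde\mu \in \RR$, define the alternative strategy $\strati'$ for contributor $i$ that collects no data (incurring no cost), but whose submission function internally draws $\colamntri$ i.i.d.\ points from $\Ncal(\tilde\mu, \sigma^2)$ and feeds them into $\subfunci$. The crux of the argument is the following distributional identity: under $(\strati', \stratmi^0)$, the joint law of $(\subdatai, \payri)$ is independent of the true $\mu$ and coincides with its law under $(\strati, \stratmi^0)$ conditional on $\mu = \tilde\mu$, since in both cases $\subfunci$ is applied to an i.i.d.\ sample from $\Ncal(\tilde\mu, \sigma^2)$. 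Carefully tracking randomization in $\colamnti$ and $\subfunci$, and using that $\reqamntri$ is a deterministic function of the public costs, is the main piece of bookkeeping.

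Plugging this identity into~\eqref{eqn:utilc} yields
\[
\utilci(\mech, (\strati', \stratmi^0)) = \EE_{\strati, \stratmi^0;\, \mu = \tilde\mu}[\payri],
\qquad
\utilci(\mech, (\strati, \stratmi^0)) = \inf_\mu \EE_{\strati, \stratmi^0;\, \mu}[\payri] - \costi \EE[\colamntri].
\]
Dominance of $\strati$ demands the former be at most the latter for every $\tilde\mu \in \RR$. Taking the infimum over $\tilde\mu$ on the left (which does not appear on the right) gives $\costi \EE[\colamntri] \leq 0$, and since $\costi > 0$ and $\colamntri \geq 0$, this forces $\colamntri = 0$ almost surely. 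Applying this to every $i \in \contributors$ yields the claim.
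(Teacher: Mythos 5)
Your proposal is correct and follows essentially the same route as the paper: fix the background profile where everyone else collects nothing, observe that fabricating the submission from an arbitrary $\Ncal(\tilde\mu,\sigma^2)$ reproduces the payment distribution of $\strati$ conditional on $\mu=\tilde\mu$ at zero collection cost, and let the worst-case infimum over $\mu$ force $\costi\,\colamntri = 0$. The paper packages this as a standalone best-response lemma (Lemma~\ref{lem:otherscollectzero}) proved by contradiction via an intermediate strategy that still collects but fabricates, whereas you compare directly against the no-collection fabricator; the substance is the same.
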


\begin{restatable}{thm}{welfareUpper}
    \label{thm:neup}
    Let $\blprofit$ be as defined in~\eqref{eqn:blwelfare}.
    For any mechanism $M$, 
    if strategy profile $\strat = \left\{(\colamnti,\identity)\right\}_{i \in \contributors}$ is a Nash equilibrium, then $\welf(M, \strat;\mu) \leq \OPT - (\cost_2 - \cost_1)$.
\end{restatable}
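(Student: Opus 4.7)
I plan to reduce the welfare to a clean expression in terms of the i.i.d.\ baseline valuation $\valdataj$ — exploiting that every contributor's submission function is $\identity$ — and then split into two cases depending on whether any contributor besides the cheapest collects data. When someone in $\contributors\setminus\{1\}$ collects, the $(\costtwo-\costone)$ gap falls out of the cost term directly; when only contributor~$1$ could collect, a Theorem~\ref{thm:dsic}-style fabrication deviation forces her to stop collecting as well, reducing the equilibrium to the trivial do-nothing profile.

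\textbf{Reduction and easy case.} Since each $\subfunci = \identity$, every submitted datum is a genuine i.i.d.\ $\Ncal(\mu,\sigma^2)$ draw, so buyer $j$'s sample $\recdata_j$ is an i.i.d.\ sample of size $\mmrj$; because $|\muhat(\recdata_j)-\mu|$ has a law depending only on $\mmrj$, we obtain $\valmechj(M,\strat) = \EE[\valdataj(\mmrj)]$. Writing $N := \sum_i \colamntri$, monotonicity of $\valdataj$ together with $\mmrj \le N$ yields
\[
\welf(M,\strat)\;\le\;\EE\!\left[\sum_{j\in\buyers}\valdataj(N)\;-\;\sum_{i\in\contributors}\costi\colamntri\right].
\]
If some $i\ge 2$ has $\colamntri\ge 1$, then since costs are sorted, $\sum_i\costi\colamntri \ge \costone\colamntr_1 + \costtwo(N-\colamntr_1) = \costone N + (\costtwo-\costone)(N-\colamntr_1) \ge \costone N + (\costtwo-\costone)$, using $N-\colamntr_1\ge 1$. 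Taking the max over $N$ and invoking~\eqref{eqn:blwelfare} gives $\welf(M,\strat)\le \OPT - (\costtwo-\costone)$.

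\textbf{Hard case.} Now suppose $\colamntri = 0$ for every $i\ge 2$; here the NE hypothesis on contributor~$1$ is essential. Consider the deviation $\strat^{\rm dev}_1(\tilde\mu)$ in which contributor~$1$ collects zero samples and submits $\colamntr_1$ fabricated draws from $\Ncal(\tilde\mu,\sigma^2)$ for any chosen $\tilde\mu\in\RR$. Because all other contributors submit $\emptyset$, the joint submission profile under this deviation is distributionally identical to truthful play at true mean $\tilde\mu$; hence $\EE[\payr_1\mid \strat^{\rm dev}_1(\tilde\mu)] = \EE_{\tilde\mu}[\payr_1]$, a quantity that is independent of the actual $\mu$. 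The NE inequality $\utilci(M,\strat) \ge \utilci(M,(\strat^{\rm dev}_1(\tilde\mu),\strat_{-1}))$ therefore reads
\[
\inf_{\mu}\EE_\mu[\payr_1]\;-\;\costone\colamntr_1\;\ge\;\EE_{\tilde\mu}[\payr_1]\qquad\text{for every }\tilde\mu\in\RR,
\]
and taking the supremum over $\tilde\mu$ on the right sandwiches $\inf_\mu\EE_\mu[\payr_1]$ between itself and $\sup_{\tilde\mu}\EE_{\tilde\mu}[\payr_1]$. This forces $\costone\colamntr_1\le 0$; hence (assuming $\costone>0$) $\colamntr_1=0$, so no data is collected at all, $\welf(M,\strat)$ collapses to $\sum_j\valdataj(0)$, and this lies below $\OPT-(\costtwo-\costone)$ whenever the market is nontrivial.

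\textbf{Main obstacle.} The technical heart is the hard case, specifically the inf/sup sandwich. The subtlety is that contributor~$1$'s truthful \emph{ex ante} utility takes an infimum over the unknown true $\mu$, while a fabricating deviator picks her own $\tilde\mu$ and thereby achieves the supremum; when no other contributor submits data, the broker has no independent signal to distinguish truth from fabrication, so the inf cannot exceed the sup. Making this rigorous requires carefully tracking the dependence of $\payr_1$'s law on the true $\mu$ under truth versus on the chosen $\tilde\mu$ under deviation, so that the crucial equality $\EE[\payr_1\mid\strat^{\rm dev}_1(\tilde\mu)] = \EE_{\tilde\mu}[\payr_1]$ holds and is independent of $\mu$.
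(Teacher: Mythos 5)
Your proof is correct and follows essentially the same route as the paper's: the reduction of welfare to $\sum_{j}\valdataj(N)-\sum_i \costi\colamntri$ with $N=\sum_i\colamntri$, the observation that a lone collector can be undercut by a costless fabrication deviation whose payment distribution is indistinguishable from truthful play at the fabricated mean (this is exactly the paper's Lemma~\ref{lem:otherscollectzero} and its corollary), and the ``cheapest way for two collectors'' cost bound $\cost_1(N-1)+\cost_2$. The only spot where you are looser than the paper is the degenerate no-collection equilibrium, where your appeal to a ``nontrivial market'' for $\sum_j\valdataj(0)\le\OPT-(\cost_2-\cost_1)$ mirrors a case the paper's own proof also glosses over when it asserts that at least two contributors must collect data.
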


Next, we will design a mechanism that achieves the welfare upper bound established here, while at the same time satisfying BB, IRC, ICC, and IRB.
\section{METHODS AND ANALYSIS}
\label{sec:mechanism_cost_known}

We have outlined our mechanism in Algorithm~\ref{alg:mechanism_c_known}.
For the purpose of extending this algorithm for profit maximization, we use four sets of input parameters:
$\inputOPT$ is the optimal baseline welfare or profit, 
$\inputreqamnt$ specifies the total amount of data that needs to be collected, $\{\buyersellj\}_{j\in\buyers}$ specifies the amounts of
data to sell to each buyer, and $\{\buyerexppricej\}_{j\in\buyers}$ is the expected price to charge each buyer.
For welfare maximization, we will set  $\inputOPT = \OPT$, $\inputreqamnt = \reqamntoneopt$, where $\OPT$ and $\reqamntoneopt$ are as defined in~\eqref{eqn:blwelfare}, and set $\buyersellj = 
\reqamntoneopt$, $\buyerexppricej = \valdataj (\reqamntoneopt)$ for all $j$.
We will choose these values differently for profit maximization in Appendix~\ref{app:profit_maximization}.

We will first describe our mechanism and then motivate our design choices.
In line~\ref{lin:cknownrequest}, our mechanism instructs the cheapest contributor to collect $\inputreqamnt-1$ points, the second cheapest contributor to collect $1$ point, and the remaining to collect no points.
It then aggregates all the received datasets and chooses $\mmrj=\buyersellj$  random points to sell to buyer $j$
(for welfare maximization, the broker will sell all the data to all buyers as $\buyersellj = \reqamntoneopt$).
He then charges buyer $j$ $\pricerj$ as shown below
in~\eqref{eq:buyer_pay},
and issues payments $\payri$ to the contributors as shown below in~\eqref{eq:contrib_pay}.

To state prices and payments,
for $i \in \{1,2\}$, let $\subdatami$ be the dataset submitted by the other contributor, $\reqamntr_{-i} = \reqamntoneopt - \reqamntri$ be the amount of data the other contributors should collect, and $d_i = \cost_i (\reqamntri)^2 / \sigma^2$. Recall that the input $\buyerexppricej$ specifies buyer $j$'s expected price. We can now express $\pricerj$ as:

{\footnotesize
\begin{align*} 
    \label{eq:buyer_pay}
    \pricerj  = & 
      \sum_{i \in \{1,2\}} \mathbb{I}\left( \left| \subdata_i \right| = \reqamntri \right) 
    \bigg( 
    \buyerexppricej \frac{\reqamntri}{\inputreqamnt} + \frac{d_i}{|\buyers|} \frac{\sigma^2}{\reqamntrmi} + \frac{d_i}{|\buyers|} \frac{\sigma^2}{ \reqamntri} 
    \bigg) \\ & 
    -\hspace{-0.05in} \sum_{i \in \{1,2\}} \frac{d_i}{|\buyers|} \bigg( \widehat\mu(\subdata_1) - \widehat\mu(\subdata_2) \bigg)^2. \numberthis
\end{align*}
}
Next, we set payments $\payri=0$ for contributors $i\in\{3,\dots,|\contributors|\}$. For contributors $i\in\{1, 2\}$ we have,

{\footnotesize
\begin{align*}
\payri &= \mathbb{I}\left( | \subdata_i | = \reqamntri\right) 
\bigg( (\inputOPT + \cost_1 - \cost_2)\frac{\reqamntri}{\inputreqamnt} 
+ \cost_i \reqamntri 
+ d_i \frac{\sigma^2}{\reqamntr_{-i}} \\[4pt]
&\quad + d_i \frac{\sigma^2}{\reqamntr_i} \bigg)
- d_i \big(\muhat(\subdata_1) - \muhat(\subdata_2)\big)^2 
. \numberthis \label{eq:contrib_pay}
\end{align*}
}

\newcommand{\insertalgotab}{\hspace{0.0cm}}

\begin{algorithm}[t] 
    \caption{Mechanism\label{alg:mechanism_c_known}}
    \begin{algorithmic}[1]
        \State
        \textbf{Input: } Target welfare or profit $\inputOPT$, Total amount of data to collect $\inputreqamnt
 \in \NN$, Amounts to sell to buyers $\{\buyersellj\}_{j \in \buyers} \in \NN^{|\buyers|}$, Expected prices to charge buyers
        $\{\buyerexppricej\}_{j \in \buyers} \in \RR^{|\buyers|}$.

        \State \label{lin:cknownrequest}\insertalgotab Instruct contributors to collect $\{\reqamntri\}_{i\in\contributors}$ where,
        \vspace{-0.05in}
        \begin{align*}
                \reqamntr_1 = \inputreqamnt -1, \  \reqamntr_2 = 1, \  \reqamntrj = 0 \text{ for all } j \geq 3. \\[-0.25in]
        \end{align*}
        \State 
        \insertalgotab   Receive datasets $\{\subdatai\}_{i\in\contributors}$ from contributors.
        \State
        \insertalgotab Set  $\mmrj= \buyersellj$.
        Buyer $j$ receives $\mmrj$ randomly chosen points from $\bigcup_{i\in\contributors}\subdatai$.
        \State \insertalgotab Charge $\pricerj$ from each buyer $j\in\buyers$ (see~\eqref{eq:buyer_pay}).
        \Comment{$\pricerj$ uses $\buyerexppricej$}
        \State
        \insertalgotab Pay $\payri$ to each contributor $i \in \contributors$ (see~\eqref{eq:contrib_pay}).
   \end{algorithmic}   
\end{algorithm}


\paraheadwnogap{Design Choices}
As indicated in~\eqref{eqn:blwelfare}, collecting $\reqamntoneopt$ data maximizes welfare.
To minimize total cost, one would expect all the work to be assigned to the cheapest contributor.
However, if only one contributor were to collect all the data, she could fabricate it; the broker---having no other data source---will not be able to detect it. Fortunately, we show that using just one data point from a second contributor is sufficient to incentivize contributor 1 to follow the rules.
Hence, we ask the second cheapest contributor to collect one point, and the rest is collected by contributor 1.

Contributor payments distribute the profit proportional to contributions, with contributor $i$'s \emph{expected payment} given by
$\costi\reqamntri + (\OPT \hspace{-0.02in}+ \cost_1 -\hspace{-0.03in} \cost_2) \reqamntri/\reqamntoneopt$.
Actual payments are adjusted based on the difference between the means of contributors’ datasets, with a larger difference lowering payments.
We prove that each contributor should report truthfully when the other does so, to minimize this difference, thus achieving ICC.
This difference is reflected in buyer prices to ensure BB and IRB.

The following theorem summarizes the key properties of our mechanism. It satisfies all the required constraints while achieving a welfare that matches the upper bound in Theorem~\ref{thm:neup}, demonstrating that the mechanism is essentially unimprovable. The proof is given in Appendix~\ref{app:mainproof}.  

\begin{restatable}{thm}{thmNIC} \label{thm:multi_thm}
Let $M$ be the mechanism in Algorithm~\ref{alg:mechanism_c_known}
executed with inputs
$\inputOPT = \OPT$, $\inputreqamnt = \reqamntoneopt$, 
$\buyersellj = 
\reqamntoneopt$, and $\buyerexppricej = \valdataj(\reqamntoneopt)$ for all $j\in\buyers$, where $\OPT$ and $\reqamntoneopt$ are as defined in~\eqref{eqn:blwelfare}.
Then, $M$ satisfies BB, IRC, ICC, and IRB.
    Moreover, its expected welfare at the well-behaved NE is $\welf(M, \stratopt;\mu) = \OPT - (\cost_2 - \cost_1)$.
\end{restatable}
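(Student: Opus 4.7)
I would prove the theorem in two stages: first, routine expectation calculations at the well-behaved profile $\stratopt$ that establish the welfare identity together with BB, IRC, and IRB; and second, the genuinely nontrivial Nash-equilibrium argument for ICC. I tackle them in that order because the ICC analysis reuses the same cancellations that arise in the first stage.

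\textbf{Stage 1: computations at $\stratopt$.} Under $\stratopt$, both indicators in~\eqref{eq:buyer_pay} and~\eqref{eq:contrib_pay} equal $1$, and $\subdata_1, \subdata_2$ are independent i.i.d.\ samples from $\Ncal(\mu,\sigma^2)$ of sizes $\reqamntr_1 = \reqamntoneopt-1$ and $\reqamntr_2 = 1$. The single identity
\[
    \EE\bigl[(\muhat(\subdata_1) - \muhat(\subdata_2))^2\bigr] = \sigma^2/\reqamntr_1 + \sigma^2/\reqamntr_2
\]
(independence plus a bias-variance split) exactly cancels the $d_i(\sigma^2/\reqamntri + \sigma^2/\reqamntrmi)$ summands in both the payment and price formulas. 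Hence $\EE[\payri] = (\OPT + \cost_1 - \cost_2)\reqamntri/\reqamntoneopt + \cost_i \reqamntri$ and $\EE[\pricerj] = \buyerexppricej = \valdataj(\reqamntoneopt)$, both independent of $\mu$. I then read off: IRC via $\utilci = (\OPT + \cost_1 - \cost_2)\reqamntri/\reqamntoneopt \geq 0$ (using $\OPT \geq \cost_2 - \cost_1$, which may be assumed since otherwise~\eqref{eqn:blwelfare} sets $\reqamntoneopt = 0$); IRB via $\utilbj = \valdataj(\reqamntoneopt) - \valdataj(\reqamntoneopt) = 0$; and the welfare identity $\welf(M,\stratopt) = \sum_j \valdataj(\reqamntoneopt) - \cost_1(\reqamntoneopt-1) - \cost_2 = \OPT - (\cost_2 - \cost_1)$. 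For BB, I verify $\sum_j \pricerj = \sum_{i \in \contributors} \payri$ pointwise on every realization: the quadratic penalty aggregates identically on both sides after the $|\buyers|^{-1}$-scaled buyer terms are summed over $j$, and the indicator-gated linear components match via $\sum_j \buyerexppricej = \OPT + \cost_1 \reqamntoneopt$ together with $\reqamntr_1 + \reqamntr_2 = \reqamntoneopt$ and $\cost_1\reqamntr_1 + \cost_2\reqamntr_2 = \cost_1\reqamntoneopt + (\cost_2 - \cost_1)$.

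\textbf{Stage 2, the main obstacle: ICC.} Fix $i \in \{1,2\}$ and a deviation $\strati = (\colamnti,\subfunci)$ with contributor $-i$ playing $\stratopt_{-i}$. I proceed in three steps.
(1) \emph{Quota enforcement.} If $|\subdata_i| \neq \reqamntri$, the positive indicator block of $\payri$ vanishes, leaving only the penalty $-d_i(\muhat(\subdata_1) - \muhat(\subdata_2))^2$, whose worst-case expectation over $\mu$ is non-positive. This upper bounds the utility by $0 < (\OPT + \cost_1 - \cost_2)\reqamntri/\reqamntoneopt$, so the best response sets $|\subdata_i| = \reqamntri$.
(2) \emph{Reduction to unbiased estimators via $\inf_\mu$.} Let $g(\mu) = \EE_Z[\muhat(\subdata_i)]$ where $Z \sim \Ncal(\mu,\sigma^2)^{n_i}$ is the contributor's collected data. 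Independence of $\subdata_i$ from $\subdata_{-i}$, together with unbiasedness of $\muhat(\subdata_{-i})$ under $\stratopt_{-i}$, gives
\[
    \EE\bigl[(\muhat(\subdata_i) - \muhat(\subdata_{-i}))^2\bigr] = \Var(\muhat(\subdata_i)) + (g(\mu)-\mu)^2 + \sigma^2/\reqamntrmi.
\]
For $\utilci = \inf_\mu \EE[\payri - \cost_i n_i]$ to exceed $-\infty$, $\sup_\mu (g(\mu)-\mu)^2$ must be finite, and since any positive bias strictly lowers the utility, one may restrict to unbiased strategies with $g(\mu) \equiv \mu$.
(3) \emph{Cram\'er-Rao and AM-GM.} The Cram\'er-Rao bound for the normal location family gives $\Var(\muhat(\subdata_i)) \geq \sigma^2/n_i$ for any unbiased estimator based on $n_i$ i.i.d.\ samples. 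Substituting, and using $d_i = \cost_i \reqamntri^2/\sigma^2$,
\[
    \utilci \;\leq\; \frac{(\OPT + \cost_1 - \cost_2)\reqamntri}{\reqamntoneopt} + \cost_i\!\left(2\reqamntri - n_i - \frac{\reqamntri^2}{n_i}\right),
\]
and AM-GM ($n_i + \reqamntri^2/n_i \geq 2\reqamntri$, equality iff $n_i = \reqamntri$) shows no deviation improves on truth. Contributors $i \geq 3$ have $\reqamntri = 0$ and $\payri \equiv 0$, so $\stratopti$ is trivially their best response. The genuine obstacle is Step~(2): using the worst-case-over-$\mu$ operator to rule out biased estimators without making any structural assumption on the deviation; once that reduction is in place, Cram\'er-Rao and AM-GM close the remaining algebra cleanly.
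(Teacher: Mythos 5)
Your overall architecture coincides with the paper's: your Stage 1 is the same sequence of cancellation identities the paper uses for BB, IRC, IRB, and the welfare computation (the totals in BB match only after summing over $i\in\{1,2\}$, which is fine since under $\stratopt$ both indicators equal $1$ almost surely, exactly as in the paper); your quota-enforcement step plays the role of Lemma~\ref{lem:NIC-part1}; and your AM--GM step is an equivalent substitute for the concavity/calculus argument in Lemma~\ref{lem:NIC-part3}. The genuine gap is in your Step~(2), the reduction to unbiased estimators. After conditioning on $|\subdatai|=\reqamntri$, what must be shown is that for \emph{every} (possibly randomized) submission function, $\sup_{\mu}\EE\bigl[(\muhat(\subdatai)-\mu)^2\bigr]\geq\sigma^2/n_i$, where $\muhat(\subdatai)$ is an arbitrary estimator of $\mu$ built from $n_i$ i.i.d.\ samples. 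Your justification --- ``any positive bias strictly lowers the utility, so one may restrict to unbiased strategies'' --- implicitly holds the variance fixed while varying the bias; but a deviating contributor can trade bias for variance, so ruling out biased estimators is precisely the hard part of the claim rather than a corollary of it. The Cram\'er--Rao bound you then apply covers only unbiased estimators (and needs regularity conditions that an arbitrary, e.g.\ randomized or non-smooth, submission rule need not satisfy), so the chain ``reduce to unbiased $+$ Cram\'er--Rao'' does not close. That this reduction is not innocuous is illustrated by the James--Stein phenomenon: in dimension three or higher a biased estimator strictly dominates the sample mean, so the analogous heuristic fails outright there; in dimension one the conclusion is true but requires the information inequality with the bias-derivative term, or a least-favorable-prior argument.

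The fix is exactly what the paper does in Lemma~\ref{lem:NIC-part2}: after isolating $\sup_\mu\EE\bigl[(\muhat(\subdatai)-\mu)^2\bigr]$ via the independence/unbiasedness decomposition (which you also derive), invoke the minimax optimality of the sample mean for the normal location family over \emph{all} estimators \citep{lehmann2006theory} to lower-bound this supremum by $\sup_\mu\EE\bigl[(\muhat(\initdatai)-\mu)^2\bigr]=\sigma^2/n_i$. With that single substitution, Cram\'er--Rao is no longer needed, and your Step~(3) algebra and AM--GM recover the paper's conclusion verbatim.
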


\vspace{-10pt}

\parahead{Discussion}
We highlight three observations.
\emph{1) Multiple minimum-Cost Contributors:} If multiple contributors share the minimum cost, evenly distributing data collection among them yields a welfare of $\blprofit$ and reduces the variance in the comparison terms of equations~\eqref{eq:buyer_pay} and~\eqref{eq:contrib_pay}. This lowers the variance of prices and payments which may be desirable in practice.
\emph{2) Beyond Mean Estimation:} Our insights likely generalize to learning tasks beyond mean estimation, such as the Gaussian family, where similar discrepancy-based incentives may be constructed with minor modifications. We anticipate the absence of nontrivial DSIC mechanisms and the sufficiency of two contributors for an optimal NE. However, the requirement of a single sample from the second contributor is specific to normal mean estimation as it has a  single learnable parameter. Generally, the number of samples needed from the second contributor may scale with the model's degrees of freedom.
\emph{3) Envy-free pricing:} The pricing in~\eqref{eq:buyer_pay} may seem ``unfair'' as buyers may pay different prices for identical datasets. A natural fairness requirement in this context is envy-freeness (EF)~\citep{varian1973equity,guruswami2005profit}. 
While EF is not our primary focus here for simplicity, ~\S\ref{sec:profit_max} and Appendix~\ref{app:profit_maximization} show that our mechanism can be adapted for EF profit maximization.

\subsection{Proof Sketch of ICC in Theorem~\ref{thm:multi_thm}}
\label{sec:iccsketch}

We now, outline a proof sketch for incentive-compatibility.
To prove ICC, we must show that $\stratopt$ is a Nash equilibrium of $\mechopt$. 
That is, we will show, $\forall i\in\contributors$, 
    $\forall\rbr{\colamnti,\subfunci}$,
    \begin{equation*}
        \utilci\rbr{
            \mechopt, \rbr{
                (\colamnti,\subfunci),
               \stratopt_{-i}
            }
        }
        \leq
        \utilci\rbr{
            \mechopt, \rbr{
                 \stratopt_{i},
                 \stratopt_{-i}
            }
        } . 
    \end{equation*}

Note that because we are only asking the two contributors with the lowest collection costs to collect data, we can focus on $\cbr{1,2}\in\contributors$ as the above inequality is immediately satisfied for all other contributors. We now prove the inequality for these contributors in three steps, each with a corresponding lemma.

\textbf{Step 1.}
Let $(\colamnti,\subfunci)$ be given.
First we show that a contributor can improve her utility by switching their strategy from $(\colamnti,\subfunci)$ to $(\colamnti,\subfunc_a)$, where $\subfunc_a$ is a function that always submits exactly the requested amount of points $\reqamntri$, while maintaining the same sample mean as $\subfunci$. This is true as our mechanism determines a contributors penalty based on the sample mean of their submitted data and whether or not they submitted exactly $\reqamntri$ points. This is stated formally in the lemma below.

\textbf{(\text{Lemma}~\ref{lem:NIC-part1}):}
Fix any $(\colamnti,\subfunci)$. Let $\subfunc_a:\dataspace\rightarrow\RR^{\reqamntri}$ be a function such that $\forall \initdatai\in\dataspace$, $\muhat\rbr{\subfunc_a(\initdatai)}=\muhat\rbr{\subfunci(\initdatai)}$. Then,
\begin{align*}
    \utilci \rbr{M, \rbr{(\colamnti,\subfunci), \stratopt_{-i} }}
    \leq
    \utilci \rbr{M, \rbr{(\colamnti,\subfunc_a), \stratopt_{-i} }} .
\end{align*}

\textbf{Step 2.} 
The second step is showing that a contributor can improve their utility by switching their strategy from $(\colamnti,\subfunc_a)$ to $(\colamnti,\subfunc_b)$, where $\subfunc_b$ is a function that always submits exactly $\reqamntri$ points and whose sample mean agrees with the sample mean of the original dataset she collected $\initdatai$.
This step uses the term in~\eqref{eq:contrib_pay} which penalizes contributors according to the difference in reported means.
Its proof uses techniques from minimax optimality for normal mean estimation~\citep{lehmann2006theory}. This is stated formally in the lemma below.

\textbf{(\text{Lemma}~\ref{lem:NIC-part2}):}
Fix any $\colamnti$. Consider any $\subfunc_a:\dataspace\rightarrow\RR^{\reqamntri}$ and let $\subfunc_b:\dataspace\rightarrow\RR^{\reqamntri}$ be a function such that 
$\forall \initdatai\in\dataspace$, $\muhat\rbr{\subfunc_b(\initdatai)}=\muhat\rbr{\initdatai}$. Then,
\begin{align*}
    \utilci \rbr{M, \rbr{(\colamnti,\subfunc_a), \stratopt_{-i} }}
    \leq
    \utilci \rbr{M, \rbr{(\colamnti,\subfunc_b), \stratopt_{-i} }} .
\end{align*}

\textbf{Step 3.}
Finally, we show a contributor can improve their utility by switching their strategy from $(\colamnti,\subfunc_b)$ to $(\identity,\subfunc_b)$. We prove that when a contributor is using $\subfunc_b$ their utility is concave in $\colamntri$, which is maximized when $\colamntri = \reqamntri$. We then show the utility under $(\identity,\subfunc_b)$ is the same as under $\stratopt$.

\textbf{(\text{Lemma}~\ref{lem:NIC-part3}):}
Fix any $\colamnti$. Let $\subfunc_b:\dataspace\rightarrow\RR^{\nnopti}$ be a function such that
$\forall \initdatai\in\dataspace$, $\muhat\rbr{\subfunc_b(\initdatai)}=\muhat\rbr{\initdatai}$.
Then,
\vspace{-0.1in}
\begin{align*}
    \utilci \rbr{M, \rbr{(\colamnti,\subfunc_b), \stratopt_{-i}}}
    \leq
    \utilci \rbr{M, \rbr{\stratopt_{i}, \stratopt_{-i}}} .
\end{align*}

These three steps now combine to prove ICC. For any $(\colamnti,\subfunci)$,
    we then have that
        \begin{align*}
        \utilci \rbr{M, \rbr{(\colamnti,\subfunci),\stratopt_{-i} }}
        \leq \utilci \rbr{M, \stratopt } .
    \end{align*}

\vspace{-5pt}
\section{ENVY-FREE PROFIT MAXIMIZATION}
\label{sec:profit_max}
\vspace{-5pt}
Our mechanism in Algorithm~\ref{alg:mechanism_c_known} naturally extends to other buyer-side objectives. We illustrate this with a profit-maximization use case in Appendix~\ref{app:profit_maximization}.
We summarize the key ideas here. 

For this, let us first define the profit of a mechanism $M$ under a strategy profile $\strat$: the total revenue---i.e., the sum of all payments received from buyers---minus the total cost incurred for data collection: $\sum_{j\in\buyers} \pricerj - \sum_{i\in\contributors}\costi\colamntri$.
Hence, the expected profit is given by,
\begin{align*}
   \hspace{-0.2in} \profit(M, s; \mu) = 
    \EE\Bigg[
        \sum_{j\in\buyers} \pricerj  -
            \sum_{i\in\contributors}\costi\colamntri
        \Bigg].
        \numberthis
        \label{eqn:profit}
\end{align*}

\parahead{A Baseline for Envy-free Profit Maximization}
Similar to~\eqref{eqn:blwelfare}, we first establish a baseline for profit maximization when contributors are non-strategic.
Here, the broker seeks to maximize profit subject to two buyer-side incentive constraints: individual rationality for the buyer (IRB) and envy-freeness (EF).
Intuitively, EF requires that no buyer prefers another’s dataset and price over their own~\citep{varian1973equity,guruswami2005profit}.
In this baseline, as in~\eqref{eqn:blwelfare}, the lowest-cost contributor collects all the data. Unfortunately, unlike in~\eqref{eqn:blwelfare}, no closed-form expressions for the realizations of $\{\reqamntri\}_{i \in \contributors}$, $\{\mmrj\}_{j \in \buyers}$, or $\{\pricerj\}_{j \in \buyers}$ are available in the envy-free profit-maximization problem.

However, to apply Algorithm~\ref{alg:mechanism_c_known}, we must specify its four inputs. We can approximate the required quantities algorithmically using the following methods.

\parahead{Algorithms for Envy-free Profit Maximization}
A key technical insight, proved in Lemma~\ref{lem:equal_two_problem}, is that
\emph{envy-free revenue maximization} reduces to the
\emph{pricing ordered items} (POI) problem introduced by~\citet{chawla2007algorithmic}.
Although POI is \textsf{NP}-hard, prior work provides methods for obtaining $\bigO(\epsilon)$-additive approximations~\citep{chawla2007algorithmic,hartline2007profit,chen2024learning}.
Algorithm~\ref{alg:oipalgo} builds on these methods to compute a profit-optimal solution under truthful contributors.
It searches over data collection amounts; for each, it uses POI techniques to compute an approximately optimal revenue, and the corresponding dataset sizes and prices for buyers.
By selecting the data collection amount that maximizes revenue minus cost (assuming only the cheapest contributor collects data), we obtain $\inputreqamnt$.

\parahead{Envy-free Profit Maximization with Strategic Contributors}
Finally, we invoke Algorithm~\ref{alg:mechanism_c_known}, setting
$\inputreqamnt$ as the data collection amount from the above step, and $\inputOPT$,
$\{\buyersellj\}_{j \in \buyers}$, 
$\{\buyerexppricej\}_{j \in \buyers}$ 
as the corresponding profit, dataset sizes, and prices.
This yields a mechanism with profit 
$\OPT - \cost_2 + \cost_1 - \bigO(\epsilon)$,
where $\OPT$ denotes the optimal baseline profit,
while satisfying IRB, IRC, BB, ICC and EF.

\section{CONCLUSION AND DISCUSSION}
\label{sec:conclusion}

We design a data marketplace where a broker mediates transactions between buyers who aim to estimate the mean of an unknown normal distribution, and strategic data contributors, who collect data at a cost. Our mechanism admits a Nash equilibrium (NE) in which the two lowest-cost contributors collect and truthfully report all the data. This equilibrium maximizes welfare or profit subject to key market constraints. Matching impossibility results establish that our mechanism is unimprovable. 

Future directions include extending the framework beyond mean estimation to other learning tasks. A limitation of our current model is that contributors’ costs are assumed to be known. Relaxing this assumption to unknown costs, where contributors may strategically report both costs and data, introduces substantial additional challenges and is left to future work. It would also be interesting to allow buyers to behave strategically, for example by misreporting their valuations to obtain more favorable allocations, which would lead to a different mechanism-design problem beyond our current scope.


\section*{Acknowledgments}
This work was supported in part by the National Science Foundation Award IIS-2441796.

\bibliographystyle{apalike}
\bibliography{ref}

\clearpage

\clearpage
\appendix
\thispagestyle{empty}

\onecolumn
\appendix

\section{REVIEW OF NORMAL MEAN ESTIMATION} \label{app:mean_intro}
We first briefly review normal mean estimation~\citep{stein1981estimation,james1992estimation}. A process\footnote{ Typically, one need not model such a process explicitly: often, the learner receives $\initdata$ as is, or a random subset, keeping \( Y \) i.i.d. In a marketplace, although data is sampled i.i.d., strategic contributors may misreport data to maximize personal gain, in which case $Y$ is no longer an i.i.d.\ dataset from $\Ncal(\mu,\sigma^2)$. In what follows, we will find it useful to explicitly model such a process induced by a mechanism and contributor strategies. } $M$ collects a finite dataset $\initdata=\{z_\ell\}_{\ell}$ of i.i.d.\ points from a normal distribution $\Ncal(\mu,\sigma^2)$, and reports a finite dataset $Y = \{y_\ell\}_\ell$ to a learner. The learner knows the variance $\sigma^2$, but not the mean $\mu$. She estimates $\mu$ via the sample mean $\muhatsm(Y) = \frac{1}{|Y|}\sum_{y \in Y} y$. Conventionally, one studies the \emph{loss} as a function of the estimation error \( |\widehat{\mu}(Y) - \mu| \)~\citep{stein1981estimation}. Here, we instead model the learner's \emph{value} via a decreasing function \( \valfunc: \mathbb{R}_+ \to [0,1] \), where $\valfuncnob(e)$ is her value for achieving error $e$. The learner's valuation $\val$ for a process $M$, which depends on her error-based valuation $\valfunc$, is:
\begin{align}
    \label{eqn:valerr}
    \val(M) = \inf_{\mu\in\RR} \EE\left[ \valfuncnob\left(\left|\,\muhat(Y) - \mu\,\right|\right)
    \right].
    \end{align}
The expectation is over any randomness in $M$ and in generating the original dataset $\initdata$. The infimum over $\mu$ captures the worst-case performance, ensuring reliability across all possible $\Ncal(\mu,\sigma^2)$ since $\mu$ is unknown. To illustrate, let $\mutilde\in\RR$ be arbitrary. A (pathological) process which always reports repeated $\mutilde$ values $(\mutilde, \dots, \mutilde)$ regardless of the collected dataset, achieves maximal value when $\mu=\mutilde$ but fails elsewhere. The infimum ensures $\val(M)$ is well-defined, despite $\mu$ being unknown.


\vspace{0.1in}

\section{PROOF OF THEOREM~\ref{thm:multi_thm}}
\label{app:mainproof}

\noindent
\textbf{(BB):} We will first show that the mechanism satisfies the budget balance property, that is the total contributor payments is equal to the total revenue.
\begin{proof}
    From the definition of the payment rule in $M$ and a sequence of algebraic manipulations we have
    \begin{align*}
        \sum_{i\in\contributors}\payri
        &=
        \sum_{i\in\contributors}
        \rbr{
            \rbr{
                  (\OPT \hspace{-0.02in}+\hspace{-0.03in} \cost_1 \hspace{-0.03in}-\hspace{-0.03in} \cost_2) \frac{\reqamntri}{\reqamntoneopt}
                +
                \costi\reqamntri
                +
                \frac{d_i\sigma^2}{\reqamntri}
                +
                \frac{d_i\sigma^2}{\reqamntr_{-i}}
            }
            -d_i\rbr{
                \muhat(\initdatai)-\muhat(\initdata_{-i}) 
            }^2
        }
        \\
        &=
        \frac{  (\OPT \hspace{-0.02in}+\hspace{-0.03in} \cost_1 \hspace{-0.03in}-\hspace{-0.03in} \cost_2) }{\reqamntoneopt} \sum_{i\in\contributors} \reqamntri
        +
        \sum_{i\in\contributors} \costi\reqamntri
        +
        \sum_{i\in\contributors}
        \rbr{
            \frac{d_i\sigma^2}{\reqamntr_{-i}}
            +
            \frac{d_i\sigma^2}{\reqamntri}
            -d_i\rbr{
                \muhat(\initdatai)-\muhat(\initdata_{-i}) 
            }^2
        }
        \\
        &=
          (\OPT \hspace{-0.02in}+\hspace{-0.03in} \cost_1 \hspace{-0.03in}-\hspace{-0.03in} \cost_2)  
        +
        \sum_{i\in\contributors} \costi\reqamntri
        +
        \sum_{j\in\buyers}\sum_{i\in\contributors}
        \rbr{
            \frac{d_i\sigma^2}{\abr{\buyers}\reqamntr_{-i}}
            +
            \frac{d_i\sigma^2}{\abr{\buyers}\reqamntri}
            -
            \frac{d_i}{\abr{\buyers}}
            \rbr{
                \muhat(\initdatai)-\muhat(\initdata_{-i}) 
            }^2
        }
        \\
        &=
        \sum_{j\in\buyers} \buyerexppricej
        +
        \sum_{j\in\buyers}\sum_{i\in\contributors}
        \rbr{
            \frac{d_i\sigma^2}{\abr{\buyers}\reqamntr_{-i}}
            +
            \frac{d_i\sigma^2}{\abr{\buyers}\reqamntr_{i}}
            -
            \frac{d_i}{\abr{\buyers}}
            \rbr{
                \muhat(\initdatai)-\muhat(\initdata_{-i}) 
            }^2
        }
        \\
        &=
        \sum_{j\in\buyers}
        \rbr{
            \sum_{i\in\contributors}
            \rbr{
                \frac{\buyerexppricej \reqamntri}{\reqamntoneopt}
                +
                \frac{d_i\sigma^2}{\abr{\buyers}\reqamntr_{-i}}
                +
                \frac{d_i\sigma^2}{\abr{\buyers}\reqamntri}
                -
                \frac{d_i}{\abr{\buyers}}
                \rbr{
                    \muhat(\initdatai)-\muhat(\initdata_{-i}) 
                }^2
            }
        }
        =
        \sum_{j\in\buyers}
        \pricerj.
    \end{align*}
    Therefore, $M$ satisfies the budget balance property.
\end{proof}

\noindent
\textbf{(IRC):} Next, we will show that under $M$, $\stratopt$ is \emph{ex-ante} individually rational for contributors.
\begin{proof}
    Under the recommended strategy, we have $\colamntri = \reqamntri$ and $\subdatai = \initdatai $, the utility for contributor $i$ is 
    \begin{align*}
        \utilci(M,\stratopt)
        &=
        \inf_{\mu\in\RR}
        \EE\Bigg[ \mathbb{I}\left( \left| \subdata_i \right| = \reqamntri \right)            
            \rbr{
                (\OPT \hspace{-0.02in}+\hspace{-0.03in} \cost_1 \hspace{-0.03in}-\hspace{-0.03in} \cost_2) \frac{\reqamntri}{\reqamntoneopt} + \cost_i \reqamntri  +  \frac{d_i\sigma^2}{\reqamntr_{-i}} +  \frac{d_i\sigma^2}{ \reqamntri} 
            }
            -  d_i\rbr{
                \muhat(\initdatai)-\muhat(\initdata_{-i})
            }^2
            -\costi\reqamntri
        \Bigg]
        \\
        &=
            \rbr{
                (\OPT \hspace{-0.02in}+\hspace{-0.03in} \cost_1 \hspace{-0.03in}-\hspace{-0.03in} \cost_2) \frac{\reqamntri}{\reqamntoneopt} + \cost_i \reqamntri  +  \frac{d_i\sigma^2}{\reqamntr_{-i}} +  \frac{d_i\sigma^2}{ \reqamntri} 
            }
            -\costi\reqamntri
        -d_i \sup_{\mu\in\RR}
        \EE\Bigg[
         \rbr{
            \muhat(\initdatai)-\muhat(\initdata_{-i})
        }^2
        \Bigg]
        \\
        &=
            \rbr{
                (\OPT \hspace{-0.02in}+\hspace{-0.03in} \cost_1 \hspace{-0.03in}-\hspace{-0.03in} \cost_2) \frac{\reqamntri}{\reqamntoneopt} + \cost_i \reqamntri  +  \frac{d_i\sigma^2}{\reqamntr_{-i}} +  \frac{d_i\sigma^2}{ \reqamntri} 
            }
            -\costi\reqamntri
        -\rbr{
            \frac{d_i\sigma^2}{\reqamntr_{-i}} +  \frac{d_i\sigma^2}{ \reqamntri} 
        }
        \\
        &=
        (\OPT \hspace{-0.02in}+\hspace{-0.03in} \cost_1 \hspace{-0.03in}-\hspace{-0.03in} \cost_2) \frac{\reqamntri}{\reqamntoneopt} 
        \geq 0 .
    \end{align*}

\end{proof}

\begin{remark}
Our mechanism, as presented in Algorithm~\ref{alg:mechanism_c_known}, charges each buyer their \emph{ex-ante} value for the data (under truthful reporting), which is why the buyer's \emph{ex-ante} utility is zero. However, it is straightforward to see that by slightly reducing the price and proportionally decreasing contributor payments, one can also satisfy individual rationality for the contributors (IRC) while ensuring that buyers receive strictly positive utility.
\end{remark}

\noindent
\textbf{(ICC):} We defer the proof of the ICC property to Appendix~\ref{app:tech}.

\noindent
\textbf{(IRB):} Finally, we will show that under $M$, $\stratopt$ is individually rational for buyers.
\begin{proof}
    Under the recommended strategy, the expected payment for buyer $i$ is
    \begin{align*}
        \mathbb{E}\sbr{\pricerj}   & =  \sum_{i \in  \{1,2\} }\mathbb{I}\left( \left| \subdata_i \right| = \reqamntri \right) 
    \bigg( 
    \buyerexppricej \frac{\reqamntri}{\reqamntoneopt} + \frac{d_i}{|\buyers|} \frac{\sigma^2}{\reqamntr_{-i}} + \frac{d_i}{|\buyers|} \frac{\sigma^2}{ \reqamntri} 
    \bigg)   - \sum_{i  \in \{1,2\} } \frac{d_i}{|\buyers|} \mathbb{E}\sbr{\bigg( \muhat(\subdatai) - \muhat(\subdata_{-i}) \bigg)^2}  \\ & = \;    
    \bigg( 
     \buyerexppricej  + \sum_{i \in  \{1,2\}}  \frac{d_i}{|\buyers|} \frac{\sigma^2}{\reqamntr_{-i}} +\sum_{i \in \contributors}  \frac{d_i}{|\buyers|} \frac{\sigma^2}{ \reqamntri} 
    \bigg)   - \sum_{i \in  \{1,2\}} \frac{d_i}{|\buyers|} \rbr{ \frac{\sigma^2}{\reqamntri} + \frac{\sigma^2}{\reqamntr_{-i}} }  \\ & = \; \buyerexppricej . \numberthis \label{eqn:IRB}
    \end{align*}
        
   Therefore, combing with definition of buyer's utility (\ref{eqn:utilbone}), and buyers's value for i.i.d. data (\ref{eqn:valdataj}), buyer $j$'s utility under recommended strategy is

   \begin{align*}
       \utilbj(M, \stratopt) \defeq
        \valmechj(\mech, \stratopt) - \sup_{\mu\in\RR} \EE\left[ \pricerj \right] = \valdataj(\reqamntoneopt)- \sup_{\mu\in\RR} \EE\left[ \pricerj \right] =\valdataj(\reqamntoneopt)- \buyerexppricej  = 0,
   \end{align*}
where the third step is by (\ref{eqn:IRB}) and the last step is by definition $\buyerexppricej =  \valdataj(\reqamntoneopt)$.
\end{proof}

\noindent
\textbf{Efficiency:}
Finally, we will show that
under $M$, when the contributors follow $\stratopt$, the expected welfare achieves the optimal welfare upper bound in Theorem~\ref{thm:neup}.
\begin{proof}
Recall that contributor 1 collects $\reqamntr_1 = \reqamntoneopt - 1$ points, and contributor 2 collects $\reqamntr_2 = 1$ point.
Under truthful reporting, as contributors have i.i.d data~\eqref{eqn:valdataj},
the expected welfare~\eqref{eqn:welfare} is
\begin{align*}
       \welf(M, \stratopt;\mu) =  \sum_{j\in\buyers} \valdataj\left(\reqamntoneopt\right) - \costone(\reqamntoneopt-1) - \cost_2 = \OPT + \cost_1 - \cost_2,
\end{align*}
the last step is by definition of welfare-optimal benchmark $ \OPT = \sum_{j\in\buyers} \valdataj\left(\reqamntoneopt\right) - \costone\reqamntoneopt$ in (\ref{eqn:blwelfare}).


\end{proof}

\section{PROOF OF ICC}\label{app:tech}


We will now show the ICC property. We recommend that the readers review the proof sketch in~\S\ref{sec:iccsketch}.

We need to show that $\stratopt$ is a Nash equilibrium of $\mechopt$. 
That is, we will show, $\forall i\in\contributors$, 
    $\forall\rbr{\colamnti,\subfunci}$,
    \begin{equation*}
        \utilci\rbr{
            \mechopt, \rbr{
                (\colamnti,\subfunci),
                \stratopt_{-i}
            }
        }
        \leq
        \utilci\rbr{
            \mechopt, \rbr{
                \stratopt_{i},
                \stratopt_{-i}
            }
        } . 
    \end{equation*}

This result is proved via key three steps, formalized by the lemmas below.
    Fix any $(\colamnti,\subfunci)$. Let $\subfunc_a:\dataspace\rightarrow\RR^{\reqamntri}$ be a function such that $\forall \initdatai\in\dataspace$, $\muhat\rbr{\subfunc_a(\initdatai)}=\muhat\rbr{\subfunci(\initdatai)}$. Let $\subfunc_b:\dataspace\rightarrow\RR^{\reqamntri}$ be a function such that $\forall \initdatai\in\dataspace$, $\muhat\rbr{\subfunc_b(\initdatai)}=\muhat\rbr{\initdatai}$.
    We then have that
    \begin{align*}
        \utilci \rbr{M, \rbr{(\colamnti,\subfunci),\stratopt_{-i} }}
        &\leq
        \utilci \rbr{M, \rbr{(\colamnti,\subfunc_a),\stratopt_{-i} }}
        &
        (\text{Lemma}~\ref{lem:NIC-part1})
        \\
        &\leq
        \utilci \rbr{M, \rbr{(\colamnti,\subfunc_b),\stratopt_{-i} }}
        &
        (\text{Lemma}~\ref{lem:NIC-part2})
        \\
        &\leq
        \utilci \rbr{M, \rbr{\stratopt_{i},\stratopt_{-i} }} .
        &
        \hspace{-1cm}
        (\text{Lemma}~\ref{lem:NIC-part3})
    \end{align*}

\begin{lemma}
   \label{lem:NIC-part1} 
   Fix any $(\colamnti,\subfunci)$. Let $\subfunc_a:\dataspace\rightarrow\RR^{\reqamntri}$ be a function such that $\forall \initdatai\in\dataspace$, $\muhat\rbr{\subfunc_a(\initdatai)}=\muhat\rbr{\subfunci(\initdatai)}$. Then,
\begin{align*}
    \utilci \rbr{M, \rbr{(\colamnti,\subfunci), \stratopt_{-i} }}
    \leq
    \utilci \rbr{M, \rbr{(\colamnti,\subfunc_a), \stratopt_{-i} }} .
\end{align*}
\end{lemma}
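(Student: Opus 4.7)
The plan is to observe that under mechanism $M$, contributor $i$'s payment $\payri$ (as given in~\eqref{eq:contrib_pay}) depends on her submitted dataset $\subdatai$ only through two quantities: (a) the indicator $\mathbb{I}(|\subdatai| = \reqamntri)$ that gates the ``fixed'' term, and (b) the sample mean $\muhat(\subdatai)$, which enters via the mean-discrepancy penalty $d_i(\muhat(\subdata_1) - \muhat(\subdata_2))^2$. The function $\subfunc_a$ is engineered so that (i) $|\subfunc_a(\initdatai)| = \reqamntri$ always, forcing the indicator to $1$, and (ii) $\muhat(\subfunc_a(\initdatai)) = \muhat(\subfunci(\initdatai))$, leaving the penalty term untouched. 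So replacing $\subfunci$ by $\subfunc_a$ should weakly increase the payment while keeping the collection cost $\costi \colamntri$ fixed.

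The proof would proceed by a pointwise comparison of payments for each realization of $\initdatai$ and $\initdata_{-i}$. Since contributor $-i$ follows $\stratopt_{-i}$, the submission $\subdata_{-i}$ is the same under both strategies of $i$, so the penalty $d_i(\muhat(\subdata_1) - \muhat(\subdata_2))^2$ is identical under $\subfunci$ and $\subfunc_a$ thanks to (ii). The indicator term is weakly larger under $\subfunc_a$ by (i). To turn this into a weakly larger payment, I need the coefficient multiplying the indicator, namely $(\OPT + \cost_1 - \cost_2)\reqamntri/\reqamntoneopt + \costi \reqamntri + d_i\sigma^2/\reqamntr_{-i} + d_i\sigma^2/\reqamntri$, to be non-negative. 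The last three terms are manifestly non-negative; the first is non-negative whenever $\OPT \geq \cost_2 - \cost_1$, which is implicit in the problem being non-degenerate (otherwise the welfare guarantee of Theorem~\ref{thm:multi_thm} is vacuous).

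Since the collection rule $\colamnti$ is unchanged between the two strategies, $\colamntri$ and hence $\costi \colamntri$ are the same in every realization. Therefore, pointwise, $\payri^{(\subfunc_a)} - \costi \colamntri \geq \payri^{(\subfunci)} - \costi \colamntri$. Taking expectations preserves this inequality for each $\mu \in \RR$, and taking the infimum over $\mu$ on both sides (as in the definition of $\utilci$ in~\eqref{eqn:utilc}) yields $\utilci(M, ((\colamnti, \subfunc_a), \stratopt_{-i})) \geq \utilci(M, ((\colamnti, \subfunci), \stratopt_{-i}))$. The main (essentially only) non-routine step is verifying non-negativity of the indicator's coefficient; beyond that, the argument is a straightforward pointwise domination that survives the infimum-of-expectations operation.
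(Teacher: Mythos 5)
Your proposal is correct and follows essentially the same route as the paper's proof: both arguments note that replacing $\subfunci$ by $\subfunc_a$ leaves the mean-discrepancy penalty and the collection cost unchanged while replacing the indicator $\mathbb{I}(|\subdatai|=\reqamntri)$ by $1$ in front of a non-negative coefficient, a pointwise domination that passes through the expectation and the infimum over $\mu$. You are somewhat more explicit than the paper in flagging that the coefficient's non-negativity requires $\OPT \geq \cost_2 - \cost_1$, an implicit non-degeneracy assumption the paper also relies on (e.g.\ in its IRC argument), but this is a refinement of the same proof rather than a different one.
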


\begin{proof}
    Under $M$ we have that 
    \begin{align*}
        &\hspace{0.4cm}
        \utilci \rbr{M, \rbr{(\colamnti,\subfunci),\stratopt_{-i} }}
        \\
        &=
        \inf_{\mu\in\RR}
        \EE\Bigg[
            \mathbb{I}\left( \left| Y_i \right| = \reqamntri \right) 
            \rbr{
                (\OPT \hspace{-0.02in}+\hspace{-0.03in} \cost_1 \hspace{-0.03in}-\hspace{-0.03in} \cost_2) \frac{\reqamntri}{\reqamntoneopt} + \cost_i \reqamntri  +  \frac{d_i\sigma^2}{\reqamntr_{-i}} +  \frac{d_i\sigma^2}{ \reqamntri} 
            }
            -  d_i\rbr{
                \muhat(\subfunci(\initdatai))-\muhat(\initdata_{-i})
            }^2
            -\costi\colamntri
        \Bigg]
        \\
        &\leq
        \inf_{\mu\in\RR}
        \EE\Bigg[
            \rbr{
                (\OPT \hspace{-0.02in}+\hspace{-0.03in} \cost_1 \hspace{-0.03in}-\hspace{-0.03in} \cost_2) \frac{\reqamntri}{\reqamntoneopt} + \cost_i \reqamntri  +  \frac{d_i\sigma^2}{\reqamntr_{-i}} +  \frac{d_i\sigma^2}{ \reqamntri} 
            }
            -  d_i\rbr{
                \muhat(\subfunc_a(\initdatai))-\muhat(\initdata_{-i})
            }^2
            -\costi\colamntri
        \Bigg]
        \\
        &=
        \utilci \rbr{M, \rbr{(\colamnti,\subfunc_a),\stratopt_{-i} }} .
    \end{align*} 
    The inequality follows from the definition of $\subfunc_a$. 
\end{proof}

\begin{lemma}
   \label{lem:NIC-part2} 
   Fix any $\colamnti$. Consider any $\subfunc_a:\dataspace\rightarrow\RR^{\reqamntri}$ and let $\subfunc_b:\dataspace\rightarrow\RR^{\reqamntri}$ be a function such that 
$\forall \initdatai\in\dataspace$, $\muhat\rbr{\subfunc_b(\initdatai)}=\muhat\rbr{\initdatai}$. Then,
\begin{align*}
    \utilci \rbr{M, \rbr{(\colamnti,\subfunc_a), \stratopt_{-i} }}
    \leq
    \utilci \rbr{M, \rbr{(\colamnti,\subfunc_b), \stratopt_{-i} }} .
\end{align*}
\end{lemma}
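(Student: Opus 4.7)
The plan is to reduce the lemma to the classical minimax optimality of the sample mean for normal mean estimation. Writing out $\utilci$ using the payment formula~\eqref{eq:contrib_pay}, every term that depends on the submission's cardinality is identical under $\subfunc_a$ and $\subfunc_b$: both produce submissions of size exactly $\reqamntri$, so $\mathbb{I}(|\subdatai|=\reqamntri)=1$ in both cases, and the associated constants $(\OPT+\cost_1-\cost_2)\reqamntri/\reqamntoneopt + \cost_i\reqamntri + d_i\sigma^2/\reqamntr_{-i} + d_i\sigma^2/\reqamntri$ match. The collection cost $\costi\colamntri$ is also unchanged, since $\colamnti$ is held fixed. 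Hence the lemma reduces to proving
\begin{align*}
\sup_{\mu\in\RR} \EE\bigl[(\muhat(\subfunc_b(\initdatai)) - \muhat(\subdata_{-i}))^2\bigr]
\;\leq\;
\sup_{\mu\in\RR} \EE\bigl[(\muhat(\subfunc_a(\initdatai)) - \muhat(\subdata_{-i}))^2\bigr].
\end{align*}

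Next, since contributor $-i$ follows $\stratopt_{-i}$, the submission $\subdata_{-i}$ is an i.i.d.\ sample of size $\reqamntr_{-i}$ from $\Ncal(\mu,\sigma^2)$, independent of $\initdatai$. For any measurable $f:\RR^{\colamntri}\to\RR$ (in particular $f = \muhat\circ\subfunc_a$ or $f = \muhat\circ\subfunc_b$), decomposing $f(\initdatai) - \muhat(\subdata_{-i}) = (f(\initdatai) - \mu) - (\muhat(\subdata_{-i}) - \mu)$ and using independence together with $\EE[\muhat(\subdata_{-i})]=\mu$, the cross term vanishes, yielding
\begin{align*}
\EE\bigl[(f(\initdatai) - \muhat(\subdata_{-i}))^2\bigr] = \EE\bigl[(f(\initdatai) - \mu)^2\bigr] + \frac{\sigma^2}{\reqamntr_{-i}}.
\end{align*}
Since the additive variance term does not depend on $\mu$, taking suprema gives $\sup_\mu \EE[(f(\initdatai) - \muhat(\subdata_{-i}))^2] = \sup_\mu\EE[(f(\initdatai)-\mu)^2] + \sigma^2/\reqamntr_{-i}$. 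Therefore the claim further reduces to
\begin{align*}
\sup_{\mu\in\RR} \EE\bigl[(\muhat(\initdatai) - \mu)^2\bigr] \;\leq\; \sup_{\mu\in\RR} \EE\bigl[(\muhat(\subfunc_a(\initdatai)) - \mu)^2\bigr].
\end{align*}

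The final step invokes minimax optimality. For $\colamntri \geq 1$, the left-hand side equals $\sigma^2/\colamntri$, since the sample mean of $\colamntri$ i.i.d.\ $\Ncal(\mu,\sigma^2)$ samples is unbiased with this variance, independent of $\mu$. By the classical minimax theorem for the normal location family on $\RR$ with known variance and squared loss (see, e.g., \citet{lehmann2006theory}), no measurable estimator $f$ can have uniformly smaller worst-case mean squared error than $\sigma^2/\colamntri$. When $\colamntri=0$, any admissible $f$ is a constant and its worst-case squared error over $\mu\in\RR$ is $+\infty$, so both utilities diverge to $-\infty$ and the inequality holds trivially. The main obstacle is the minimax step: one must verify that the sample mean's optimality extends across all measurable (possibly biased or randomized) estimators induced by arbitrary $\subfunc_a$; the standard route is a Bayes-limit argument using $\Ncal(0,\tau^2)$ priors and sending $\tau\to\infty$, which matches the sample mean's worst-case risk.
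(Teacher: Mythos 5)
Your proposal is correct and follows essentially the same route as the paper's proof: cancel the matching cardinality-dependent terms, use independence and unbiasedness of the other contributor's truthful submission to split off the additive $\sigma^2/\reqamntr_{-i}$ term, and then invoke minimax optimality of the sample mean for the normal location family to compare $\sup_\mu \EE[(\muhat(\subfunc_a(\initdatai))-\mu)^2]$ against $\sup_\mu \EE[(\muhat(\initdatai)-\mu)^2]$. Your added treatment of the $\colamntri=0$ edge case and the sketch of the Bayes-limit argument are harmless elaborations on the same argument.
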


\begin{proof}
    From the definition of $\subfunc_a$ we have
    \begin{align*}
        &\utilci \rbr{M, \rbr{(\colamnti,\subfunc_a),\stratopt_{-i} }}
         \\
        =&
        \inf_{\mu\in\RR}
        \EE\Bigg[
            \rbr{
                (\OPT \hspace{-0.02in}+\hspace{-0.03in} \cost_1 \hspace{-0.03in}-\hspace{-0.03in} \cost_2) \frac{\reqamntri}{\reqamntoneopt} + \cost_i \reqamntri  +  \frac{d_i\sigma^2}{\reqamntr_{-i}} +  \frac{d_i\sigma^2}{ \reqamntri} 
            }
            -  d_i\rbr{
                \muhat(\subfunc_a(\initdatai))-\muhat(\initdata_{-i})
            }^2
            -\costi\colamntri
        \Bigg]
        \\
        =&
        \rbr{
            (\OPT \hspace{-0.02in}+\hspace{-0.03in} \cost_1 \hspace{-0.03in}-\hspace{-0.03in} \cost_2) \frac{\reqamntri}{\reqamntoneopt} + \cost_i \reqamntri  +  \frac{d_i\sigma^2}{\reqamntr_{-i}} +  \frac{d_i\sigma^2}{ \reqamntri} 
        }
        -\costi\colamntri
        +d_i
        \inf_{\mu\in\RR}
        \EE\sbr{
            -
            \rbr{
                \muhat(\subfunc_a(\initdatai))-\muhat(\initdata_{-i})
            }^2
        }
        \\
       =&
        \rbr{
            (\OPT \hspace{-0.02in}+\hspace{-0.03in} \cost_1 \hspace{-0.03in}-\hspace{-0.03in} \cost_2) \frac{\reqamntri}{\reqamntoneopt} + \cost_i \reqamntri  +  \frac{d_i\sigma^2}{\reqamntr_{-i}} +  \frac{d_i\sigma^2}{ \reqamntri} 
        }
        -\costi\colamntri
        -d_i
        \sup_{\mu\in\RR}
        \EE\sbr{
            \rbr{
                \muhat(\subfunc_a(\initdatai))-\muhat(\initdata_{-i})
            }^2
        }.
        \numberthis
        \label{eq:b2-largeeq}
    \end{align*} 
    Now notice that because the other agents truthfully submit a combined $\reqamntr_{-i}$ points under $\stratopt_{-i}$, we have that 
    \begin{align*}
        \sup_{\mu\in\RR}
        \EE\sbr{
            \rbr{
                \muhat(\subfunc_a(\initdatai))-\muhat(\initdata_{-i})
            }^2
        }
        &=
        \sup_{\mu\in\RR}
        \rbr{
            \EE\sbr{
                \rbr{
                    \muhat(\subfunc_a(\initdatai))-\mu
                }^2
            }
            +
            \EE\sbr{
                \rbr{
                    \muhat(\initdata_{-i})-\mu
                }^2
            }
        }
        \\
        &=
        \sup_{\mu\in\RR}
        \rbr{
            \EE\sbr{
                \rbr{
                    \muhat(\subfunc_a(\initdatai))-\mu
                }^2
            }
        }
        +\frac{\sigma^2}{\reqamntr_{-i}} .
    \end{align*}
    Therefore, we can rewrite~\eqref{eq:b2-largeeq}, using the fact that the sample mean is minimax optimal~\citep{lehmann2006theory} and the definition of $\subfunc_b$, to get
    \begin{align*}
        &\hspace{0.4cm}
        \rbr{
            (\OPT \hspace{-0.02in}+\hspace{-0.03in} \cost_1 \hspace{-0.03in}-\hspace{-0.03in} \cost_2) \frac{\reqamntri}{\reqamntoneopt} + \cost_i \reqamntri  +  \frac{d_i\sigma^2}{\reqamntr_{-i}} +  \frac{d_i\sigma^2}{ \reqamntri} 
        }
        -\costi\colamntri
        -\frac{d_i\sigma^2}{\reqamntr_{-i}}
        -d_i
        \sup_{\mu\in\RR}
        \EE\sbr{
            \rbr{
                \muhat(\subfunc_a(\initdatai))-\mu
            }^2
        }
        \\
        &\leq
        \rbr{
            (\OPT \hspace{-0.02in}+\hspace{-0.03in} \cost_1 \hspace{-0.03in}-\hspace{-0.03in} \cost_2) \frac{\reqamntri}{\reqamntoneopt} + \cost_i \reqamntri  +  \frac{d_i\sigma^2}{\reqamntr_{-i}} +  \frac{d_i\sigma^2}{ \reqamntri} 
        }
        -\costi\colamntri
        -\frac{d_i\sigma^2}{\reqamntr_{-i}}
        -d_i
        \sup_{\mu\in\RR}
        \EE\sbr{
            \rbr{
                \muhat(\initdatai)-\mu
            }^2
        }
        \\
        &=
        \rbr{
            (\OPT \hspace{-0.02in}+\hspace{-0.03in} \cost_1 \hspace{-0.03in}-\hspace{-0.03in} \cost_2) \frac{\reqamntri}{\reqamntoneopt} + \cost_i \reqamntri  +  \frac{d_i\sigma^2}{\reqamntr_{-i}} +  \frac{d_i\sigma^2}{ \reqamntri} 
        }
        -\costi\colamntri
        -d_i
        \sup_{\mu\in\RR}
        \EE\sbr{
            \rbr{
                \muhat(\initdatai)-\muhat(\initdata_{-i})
            }^2
        }
        \\
        &=
        \utilci \rbr{M, \rbr{(\colamnti,\subfunc_b),\stratopt_{-i} }} .
    \end{align*}
    
\end{proof}

\begin{lemma}
   \label{lem:NIC-part3} 
   Fix any $\colamnti$. Let $\subfunc_b:\dataspace\rightarrow\RR^{\reqamntri}$ be a function such that
$\forall \initdatai\in\dataspace$, $\muhat\rbr{\subfunc_b(\initdatai)}=\muhat\rbr{\initdatai}$.
Then,
\vspace{-0.1in}
\begin{align*}
    \utilci \rbr{M, \rbr{(\colamnti,\subfunc_b), \stratopt_{-i}}}
    \leq
    \utilci \rbr{M, \rbr{\stratopt_{i}, \stratopt_{-i}}} .
\end{align*}
\end{lemma}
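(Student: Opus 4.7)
My plan is to reduce $\utilci(M, ((\colamnti, \subfunc_b), \stratopt_{-i}))$ to a closed-form scalar function of $\colamntri$ alone, and then show this function is maximized at $\colamntri = \reqamntri$, which is precisely the sample size prescribed by $\stratopt_i$. At the maximum, the value will coincide with $\utilci(M, \stratopt) = (\OPT+\cost_1-\cost_2)\reqamntri/\reqamntoneopt$ already computed in the IRC argument above, finishing the lemma.

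For the reduction, I will substitute the payment rule~\eqref{eq:contrib_pay} into the utility definition~\eqref{eqn:utilc} and simplify using two immediate facts about $\subfunc_b$: (i) it outputs exactly $\reqamntri$ points, so $\mathbb{I}(|\subdata_i| = \reqamntri) = 1$ and the bracketed term in~\eqref{eq:contrib_pay} is present in full; and (ii) $\muhat(\subfunc_b(\initdatai)) = \muhat(\initdatai)$, so the penalty equals $d_i(\muhat(\initdatai) - \muhat(\initdata_{-i}))^2$. Under $\stratopt_{-i}$, the other contributors jointly submit $\reqamntr_{-i}$ i.i.d.\ samples from $\Ncal(\mu,\sigma^2)$, so $\muhat(\initdatai)$ and $\muhat(\initdata_{-i})$ are independent with common mean $\mu$ and variances $\sigma^2/\colamntri$ and $\sigma^2/\reqamntr_{-i}$, respectively. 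Hence their difference is a zero-mean normal with variance $\sigma^2/\colamntri + \sigma^2/\reqamntr_{-i}$, and its expected square equals that variance \emph{independently of $\mu$}. This is the critical observation: the $\mu$-free expectation lets the infimum over $\mu$ in~\eqref{eqn:utilc} collapse trivially to the expectation itself.

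Plugging in and using $d_i = \costi\reqamntri^2/\sigma^2$, the utility reduces to the deterministic expression $g(\colamntri) = (\OPT+\cost_1-\cost_2)\reqamntri/\reqamntoneopt + 2\costi\reqamntri - \costi\reqamntri^2/\colamntri - \costi\colamntri$, where the terms involving $1/\reqamntr_{-i}$ cancel. Viewing $g$ as a function on $(0,\infty)$, $g'(\colamntri) = \costi\reqamntri^2/\colamntri^2 - \costi$ vanishes precisely at $\colamntri = \reqamntri$, while $g''(\colamntri) = -2\costi\reqamntri^2/\colamntri^3 < 0$, so $g$ is strictly concave with a unique interior maximizer at $\colamntri = \reqamntri$; the maximum over positive integers is therefore attained at the same point, yielding $g(\reqamntri) = (\OPT+\cost_1-\cost_2)\reqamntri/\reqamntoneopt$, which matches $\utilci(M, \stratopt)$ by the IRC calculation.

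The only remaining case is $\colamntri = 0$, which is degenerate because $\muhat(\initdatai)$ is undefined on the empty sample; interpreting $\subfunc_b(\emptyset)$ as any deterministic vector $y \in \RR^{\reqamntri}$ yields $\sup_\mu\EE[(\muhat(y) - \muhat(\initdata_{-i}))^2] = +\infty$ since the bias $(\muhat(y)-\mu)^2$ is unbounded in $\mu$, so the utility is $-\infty$ and trivially dominated by $\stratopt_i$. The main technical subtlety throughout is the interchange of $\inf_\mu$ with $\EE$ in~\eqref{eqn:utilc}; this succeeds here precisely because both sample means share mean $\mu$ and their expected squared-difference reduces to a pure variance, which is exactly why Lemmas~\ref{lem:NIC-part1} and~\ref{lem:NIC-part2} were invoked first to restrict attention to mean-preserving submissions.
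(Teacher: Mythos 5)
Your proof is correct and follows essentially the same route as the paper's: substitute the payment rule, use independence and the common mean $\mu$ to show the expected squared mean-difference equals $\sigma^2/\colamntri + \sigma^2/\reqamntr_{-i}$ independently of $\mu$, and then maximize the resulting concave function of $\colamntri$, whose critical point is $\colamntri=\reqamntri$ because $d_i=\costi\reqamntri^2/\sigma^2$. Your explicit treatment of the degenerate case $\colamntri=0$ is a small addition the paper omits, but otherwise the arguments coincide.
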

\begin{proof}
    From the definition of $\subfunc_b$ we have that 
    \begin{align*}
        &\utilci \rbr{M, \rbr{(\colamnti,\subfunc_b),\stratopt_{-i} }}
        \\= &
        \inf_{\mu\in\RR}
        \EE\Bigg[
            \rbr{
                (\OPT \hspace{-0.02in}+\hspace{-0.03in} \cost_1 \hspace{-0.03in}-\hspace{-0.03in} \cost_2) \frac{\reqamntri}{\reqamntoneopt} + \cost_i \reqamntri  +  \frac{d_i\sigma^2}{\reqamntr_{-i}} +  \frac{d_i\sigma^2}{ \reqamntri} 
            }
            -  d_i\rbr{
                \muhat(\subfunc_b(\initdatai))-\muhat(\initdata_{-i})
            }^2
            -\costi\colamntri
        \Bigg]
        \\
        =&
        \rbr{
            (\OPT \hspace{-0.02in}+\hspace{-0.03in} \cost_1 \hspace{-0.03in}-\hspace{-0.03in} \cost_2) \frac{\reqamntri}{\reqamntoneopt} + \cost_i \reqamntri  +  \frac{d_i\sigma^2}{\reqamntr_{-i}} +  \frac{d_i\sigma^2}{ \reqamntri} 
        }
        -\costi\colamntri
        -d_i
        \sup_{\mu\in\RR}
        \EE\sbr{
            \rbr{
                \muhat(\initdatai)-\muhat(\initdata_{-i})
            }^2
        }
        \\
        =&
        \rbr{
            (\OPT \hspace{-0.02in}+\hspace{-0.03in} \cost_1 \hspace{-0.03in}-\hspace{-0.03in} \cost_2) \frac{\reqamntri}{\reqamntoneopt} + \cost_i \reqamntri  +  \frac{d_i\sigma^2}{\reqamntr_{-i}} +  \frac{d_i\sigma^2}{ \reqamntri} 
        }
        -\costi\colamntri
        -\frac{d_i\sigma^2}{\colamntri}
        -\frac{d_i\sigma^2}{\reqamntr_{-i}}
        .
    \end{align*}
    Define the concave function $l:\RR_+\rightarrow\RR$, given by $l(\colamntri)=-\costi\colamntri-\frac{d_i\sigma^2}{\colamntri}$. Observe that $l'(\colamntri)=-\costi+\frac{d_i\sigma^2}{{\colamntri}^2}$ so $l'(\colamntri)=0\iff \colamntri=\sigma\sqrt{\frac{d_i}{\costi}}=\reqamntri$. Therefore the choice of $\colamntri$ which maximizes $\utilci \rbr{M, \rbr{(\colamnti,\subfunc_b),\stratopt_{-i} }}$ is $\reqamntri$. Together with the fact that $\muhat(\subfunc_b(\initdatai))=\muhat(\initdatai)=\muhat(\identity(\initdatai))$ we conclude
    \begin{align*}
        \utilci \rbr{M, \rbr{(\colamnti,\subfunc_b),\stratopt_{-i} }}
        \leq
        \utilci \rbr{M, \rbr{(\identity,\subfunc_b),\stratopt_{-i} }}
        =
        \utilci \rbr{M, \rbr{(\identity,\identity),\stratopt_{-i} }}
        =
        \utilci \rbr{M, \rbr{\stratopti,\stratopt_{-i} }} .
    \end{align*}
\end{proof}

\section{PROOF OF IMPOSSIBILITY RESULT IN SECTION~\S\ref{sec:hardness}}
\label{app:hardnessproofs}


We will now prove our impossibility results in Section~\S\ref{sec:hardness}.
In this section, with a slight abuse of notation, we will use $0$ to denote both the scalar zero, and the zero function.

The following lemma is the key technical ingredient in proving both Theorem~\ref{thm:dsic} and Theorem~\ref{thm:neup}.
It states that when other agents are not collecting any data, then, regardless of their submission functions, the best response for an agent is to also not collect any data.
Note that even when an agent has not collected any data, an untruthful submission function may actually report some data.

\begin{lemma}
    \label{lem:otherscollectzero}
        Let $M$ be any mechanism and let
        $\stratmi = \{(0, \subfuncj)\}_{j\neq i}$ be a strategy profile for all agents except $i$, where
        $\{\subfuncj\}_{j\neq i}$ are arbitrary submission functions.
        If $\strati = (  \colamnti, \subfunci)$ is the best response of agent $i$ to $\stratmi$, then $\colamnti=0$ (\ie $   \colamntri = \colamnti(\reqamntri, \costi) =0$ for all $\reqamntri, \costi$).
\end{lemma}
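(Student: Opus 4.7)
The plan is a fabrication-via-simulation argument: I would show that any strategy with positive expected data collection is strictly dominated by a strategy that collects no data but submits a fabricated dataset sampled internally. The key observation is that under $\stratmi$, every other contributor $j \neq i$ collects zero data, so each $\subdataj = \subfuncj(\emptyset, \reqamntrj, \costj)$ is drawn from a distribution that depends only on public information and internal randomness---not on the unknown mean $\mu$.

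Fix any reference value $\mutilde \in \RR$ (say, $\mutilde = 0$) and construct a randomized submission function $\subfunc_a$ that, on the empty dataset, internally draws $\initdata \sim \Ncal(\mutilde, \sigma^2)^{\colamnti(\reqamntri, \costi)}$ and then returns $\subfunci(\initdata, \reqamntri, \costi)$. Consider the alternative strategy $(0, \subfunc_a)$, where $0$ denotes the zero collection function. Under the profile $((0, \subfunc_a), \stratmi)$, the joint distribution of all submissions $\{\subdatak\}_{k \in \contributors}$---and hence of the payment $\payri$---is independent of the true mean $\mu$, and coincides exactly with the distribution induced by $((\colamnti, \subfunci), \stratmi)$ when the true mean happens to equal $\mutilde$.

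The utility comparison is then immediate from definition~\eqref{eqn:utilc}, using that $\colamntri = \colamnti(\reqamntri, \costi)$ is itself independent of $\mu$:
\begin{align*}
\utilci\rbr{\mech, \rbr{(0,\subfunc_a), \stratmi}}
&= \EE_{\mu=\mutilde}\sbr{\payri}
\;\geq\; \inf_{\mu \in \RR} \EE_\mu\sbr{\payri} \\
&\geq\; \inf_{\mu \in \RR} \EE_\mu\sbr{\payri} \;-\; \costi\, \EE\sbr{\colamntri}
\;=\; \utilci\rbr{\mech, \rbr{(\colamnti,\subfunci), \stratmi}},
\end{align*}
with the last inequality strict whenever $\costi\,\EE[\colamntri] > 0$. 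Hence any strategy with positive expected data collection is strictly dominated and cannot be a best response, which is the lemma's conclusion.

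The main obstacle is purely bookkeeping: verifying that $\subfunc_a$ is a valid randomized submission function in the paper's formalism, and that the mechanism treats $((0,\subfunc_a), \stratmi)$ exactly as it would treat $((\colamnti, \subfunci), \stratmi)$ at $\mu = \mutilde$ (it only observes submissions and public costs). The argument is driven by a single idea: the $\inf_\mu$ in~\eqref{eqn:utilc}, combined with the fact that the other contributors' submissions carry no information about $\mu$, makes ``draw from $\Ncal(\mutilde,\sigma^2)$ internally'' a cost-free, outcome-equivalent substitute for actually collecting data.
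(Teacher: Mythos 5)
Your proposal is correct and takes essentially the same route as the paper: the fabricating strategy $(0,\subfunc_a)$ you construct is exactly the paper's $(0,\subfunc_{i,\mu'})$, and both arguments hinge on the observation that when the others collect nothing, their submissions carry no information about $\mu$, so internally sampling from a fixed $\Ncal(\mutilde,\sigma^2)$ is a cost-free, outcome-equivalent substitute whose (constant-in-$\mu$) expected payment is at least the worst-case expected payment of the original strategy. If anything, your chain of inequalities is slightly more streamlined than the paper's, which first uses the best-response hypothesis to establish that the expected payment is identical for every choice of the fabrication mean before deriving the contradiction --- a step your direct comparison shows is not needed.
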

We will prove Lemma~\ref{lem:otherscollectzero} in~\S\ref{sec:otherscollectzero}.
We first prove Theorem~\ref{thm:dsic} and Theorem~\ref{thm:neup}
using Lemma~\ref{lem:otherscollectzero}.

\DSIC* 
\begin{proof}[Proof of Theorem~\ref{thm:dsic}]
    Let $\stratmi = \{( 0, \subfuncj)\}_{j\neq i}$ be a strategy profile for all agents except $i$, where $\cbr{\subfuncj}_{j\neq i}$ are arbitrary submission functions. Since $s$ is a dominant strategy profile we know that $\forall \strati'$
    \begin{align*}
        \utilci\rbr{M, \rbr{\strati, \stratmi}}
        \geq 
        \utilci\rbr{M,\rbr{\strati', \stratmi}}
    \end{align*}
    i.e. $\strati$ is the best response to $\stratmi$. By Lemma~\ref{lem:otherscollectzero} we conclude that $  \colamntri=0$.
\end{proof}

\welfareUpper* 

We will first present a technical result, which follows as a corollary of Lemma~\ref{lem:otherscollectzero}. We will then use this result to prove Theorem~\ref{thm:neup}.

\begin{corollary}[Corollary of Lemma~\ref{lem:otherscollectzero}]
    \label{lem:nic-under-single-collector}
    If $\strat=\cbr{(  \colamnti, \subfunci)}_{i\in\contributors}$ is a NE under $\mech$ and $\forall j\neq i$, $\colamnt_j=0$, then $  \colamnti=0$.
\end{corollary}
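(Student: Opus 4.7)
The plan is to invoke Lemma~\ref{lem:otherscollectzero} essentially as a black box; the corollary should follow almost immediately once we unpack the definition of Nash equilibrium and match it with the hypothesis of the lemma.

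First, I would recall that, by definition of a Nash equilibrium, every agent's strategy is a best response to the strategies of the others. Applied to agent~$i$, this says that $\strati = (\colamnti, \subfunci)$ is a best response to $\stratmi = \{(\colamnt_j, \subfuncj)\}_{j \neq i}$ under the mechanism $\mech$. Next, I would use the hypothesis of the corollary, namely that $\colamnt_j = 0$ (the zero function, per the convention stated just before Lemma~\ref{lem:otherscollectzero}) for every $j \neq i$. This means $\stratmi$ takes exactly the form $\{(0, \subfuncj)\}_{j \neq i}$ assumed in the hypothesis of Lemma~\ref{lem:otherscollectzero}, with the submission functions $\{\subfuncj\}_{j \neq i}$ arbitrary.

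With these two observations in hand, I would directly apply Lemma~\ref{lem:otherscollectzero}, which then yields $\colamnti = 0$, concluding the proof. There is no real obstacle here beyond making sure the notational convention (using $0$ both for the scalar and the zero function) is clear and that the NE condition is weak enough to imply the best-response property invoked by the lemma's hypothesis. One sentence worth adding, for clarity, is that NE only requires best response within the agent's strategy space, which matches exactly the quantifier used in Lemma~\ref{lem:otherscollectzero}, so no strengthening or weakening is needed when passing from the lemma to the corollary.
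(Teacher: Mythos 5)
Your proposal is correct and matches the paper's proof essentially verbatim: both unpack the Nash equilibrium condition to conclude that $\strati$ is a best response to $\stratmi = \cbr{(0,\subfuncj)}_{j\neq i}$, and then apply Lemma~\ref{lem:otherscollectzero} as a black box to obtain $\colamnti = 0$. No further comment is needed.
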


\begin{proof}
Since $\strat$ is a NE under $\mech$, $\forall \strati'$ we have that 
    $\utilci\rbr{M, \rbr{\strati, \stratmi}}
    \geq 
    \utilci\rbr{M,\rbr{\strati', \stratmi}}$,
i.e. $\strati$ is the best response to $\stratmi$. 
By the hypothesis of the corollary, we also know that $\stratmi=\cbr{(0,\subfuncj)}_{j\neq i}$. Therefore, we can apply Lemma~\ref{lem:otherscollectzero} to conclude $  \colamnti = 0$. 
\end{proof}

\begin{proof}[Proof of Theorem~\ref{thm:neup}]
    As defined in \eqref{eqn:welfare}, the welfare  $\welf(\mech, s;\mu)$ of a mechanism $M$ under a strategy profile $\strat$ is \[\welf(M, s; \mu) =  \EE\bigg[ \sum_{j\in\buyers}\valmechj(M, s) - \sum_{i\in\contributors} \costi \colamntri \bigg].\]
    Under strategy $s$, the contributors are submitting a total of $\numdata  \defeq \sum_{i} 
    \colamntri$ truthful points to the broker. By buyers's value for i.i.d. data \eqref{eqn:valdataj} and \eqref{eqn:valb}, we have $\valmechj(M, s) =  \valdataj(\mmrj)$. Since $\mmrj\leq \numdata$, it follows that $ \valdataj(\mmrj) \leq \valdataj(\numdata)$. Therefore, $$ \welf(M, \strat; \mu) =  \EE\bigg[ \sum_{j \in \buyers}\valdataj(\mmrj) - \sum_{i \in \contributors} \costi \colamntri \bigg] \leq   \EE\bigg[ \sum_{j \in \buyers}\valdataj(\numdata) - \sum_{i \in \contributors} \costi \colamntri \bigg] $$

    Corollary~\ref{lem:nic-under-single-collector}
    tells us that under the Nash equilibrium $s=\cbr{(  \colamnti, \identity)}_{i\in\contributors}$, at least two contributors must collect data. Assuming that $\cost_1 \leq \cost_2 \leq \dots \leq \cost_{|\buyers|}$, we have that the cheapest way for at least two contributors to collect a total of $\numdata  \defeq \sum_{i} 
    \colamntri$ points is to have agent 1 collect $\numdata-1$ points and agent 2 collect 1 point. Combining this with the fact that 
    $\OPT$ is the best welfare achievable when contributors are non-strategic, we get
    \begin{align*}
        ~\welf(M, s;\mu) 
        \leq & \;  
         \EE\bigg[ \sum_{j \in \buyers}\valdatai\rbr{ 
            \numdata 
        } 
        - \sum_{i\in\contributors} \cost_i \colamntri \bigg] \leq \;  
         \EE\bigg[ \sum_{j \in \buyers}\valdatai \rbr{ 
            \numdata 
        } 
        -  \cost_1 \rbr{\numdata -1} -  \cost_2 \bigg] 
        \\ 
        \leq   &  \;   \max_N \cbr{  \sum_{i \in \contributors}\vali 
 \left(N\right) - \costone N} -  (\cost_2 -\cost_1)  
        =   \; \blprofit - (\cost_2 - \cost_1) .
    \end{align*}
 
    The last step follows from equation (\ref{eqn:blwelfare}), the definition of $\blprofit$. 
\end{proof}

 


\subsection{Proof of Lemma~\ref{lem:otherscollectzero}}
\label{sec:otherscollectzero}


At a high level, Lemma~\ref{lem:otherscollectzero} claims that when the other agents do not collect data, it is best for agent $i$ to not collect data. This is because the mechanism has no knowledge of the true value of $\mu\in\RR$ and thus cannot penalize agent $i$ for submitting fabricated data since the others to do not provide data to validate against. Therefore, agent $i$ has the ability to forgo the cost of data collection without decreasing the payment they receive. Hence, agents $i$'s best response involves collecting no data. 
The following proof formalizes this intuition.

\begin{proof}
    Suppose for a contradiction that $  \colamntri = \colamnti(\reqamntri, \costi) >0$. Define the strategy $\strat_{i,\mu'}=\rbr{    \colamnti,\subfunc_{i,\mu'}}$
    where $\subfunc_{i,\mu'}$ is the submission function which disregards $\initdatai$ and instead samples and submits $  \colamntri$ points from $N(\mu',\sigma^2)$ using $\subfunci$, i.e. $\subfunc_{i,\mu'}(\cdot):=\subfunci(Z_i),~Z_i\sim N(\mu',\sigma^2)^{  \colamntri}$.  
    Applying the definition of contributor $i$'s utility we see that
    \begin{align*}
        \inf_{\mu\in\RR}
        \EEV{
            \initdatai\sim N(\mu,\sigma^2)^{  \colamntri}
        }
        \sbr{
            \pay_i\rbr{
                \subfunci(\initdatai), \rbr{\subfunc_j(\varnothing)}_{j\neq i}
            }
        }
        -\costi  \colamntri
        &\geq 
        \inf_{\mu\in\RR}
        \EEV{
            Z_i\sim N(\mu',\sigma^2)^{  \colamntri}
        }
        \sbr{
            \pay_i\rbr{
                \subfunci(Z_i), \rbr{\subfunc_j(\varnothing)}_{j\neq i}
            }
        }
        -\costi  \colamntri
        \\
        \Rightarrow
        \inf_{\mu\in\RR}
        \EEV{
            \initdatai\sim N(\mu,\sigma^2)^{  \colamntri}
        }
        \sbr{
            \pay_i\rbr{
                \subfunci(\initdatai), \rbr{\subfunc_j(\varnothing)}_{j\neq i}
            }
        }
        &\geq 
        \inf_{\mu\in\RR}
        \EEV{
            Z_i\sim N(\mu',\sigma^2)^{  \colamntri}
        }
        \sbr{
            \pay_i\rbr{
                \subfunci(Z_i), \rbr{\subfunc_j(\varnothing)}_{j\neq i}
            }
        } .
    \end{align*}
    For clarity we omit writing the dependence on $M$ and $\strat_{-i}$ under the expectation as it is not relevant to the proof. Now notice that since all of the other agents are not collecting data we have
    \begin{align*}
        \inf_{\mu\in\RR}
        \EEV{
            Z_i\sim N(\mu',\sigma^2)^{  \colamntri}
        }
        \sbr{
            \pay_i\rbr{
                \subfunci(Z_i), \rbr{\subfunc_j(\varnothing)}_{j\neq i}
            }
        }
        &=
        \EEV{
            Z_i\sim N(\mu',\sigma^2)^{  \colamntri}
        }
        \sbr{
            \pay_i\rbr{
                \subfunci(Z_i), \rbr{\subfunc_j(\varnothing)}_{j\neq i}
            }
        }
        \\
        &\geq
        \inf_{\mu'\in\RR}
        \EEV{
            Z_i\sim N(\mu',\sigma^2)^{  \colamntri}
        }
        \sbr{
            \pay_i\rbr{
                \subfunci(Z_i), \rbr{\subfunc_j(\varnothing)}_{j\neq i}
            }
        }
        \\
        &=
        \inf_{\mu\in\RR}
        \EEV{
            \initdatai\sim N(\mu,\sigma^2)^{  \colamntri}
        }
        \sbr{
            \pay_i\rbr{
                \subfunci(\initdatai), \rbr{\subfunc_j(\varnothing)}_{j\neq i}
            }
        } .
    \end{align*}
    Because we have both matching upper and lower bounds, we conclude that $\forall \mu'\in\RR$
    \begin{align*}
        \inf_{\mu\in\RR}
        \EEV{
            \initdatai\sim N(\mu,\sigma^2)^{  \colamntri}
        }
        \sbr{
            \pay_i\rbr{
                \subfunci(\initdatai), \rbr{\subfunc_j(\varnothing)}_{j\neq i}
            }
        }
        = 
        \EEV{
            Z_i\sim N(\mu',\sigma^2)^{  \colamntri}
        }
        \sbr{
            \pay_i\rbr{
                \subfunci(Z_i), \rbr{\subfunc_j(\varnothing)}_{j\neq i}
            }
        } .
    \end{align*}
    Intuitively this says that the expected payment an agent receives is independent of which normal distribution they submit data from. But this means that agent $i$ can avoid the cost of data collection by submitting fabricated data with no reduction in payment. More formally, for any $\mu'\in\RR$, consider the strategy $\Tilde{\strat}_i:=
    (  0,\subfunc_{i,\mu'})$ . 
    Under this strategy agent $i$'s utility is 
    \begin{align*}
        \utilci\rbr{M, \rbr{\Tilde{\strat}_i, \strat_{-i}}}
        &=
        \inf_{\mu\in\RR}
        \EEV{
            Z_i\sim N(\mu',\sigma^2)^{  \colamntri}
        }
        \sbr{
            \pay_i\rbr{
                \subfunci(Z_i), \rbr{\subfunc_j(\varnothing)}_{j\neq i}
            }
        }
        \\
        &=
        \inf_{\mu\in\RR}
        \EEV{
            \initdatai\sim N(\mu,\sigma^2)^{  \colamntri}
        }
        \sbr{
            \pay_i\rbr{
                \subfunci(\initdatai), \rbr{\subfunc_j(\varnothing)}_{j\neq i}
            }
        }
        \\
        &>
        \inf_{\mu\in\RR}
        \EEV{
            \initdatai\sim N(\mu,\sigma^2)^{  \colamntri}
        }
        \sbr{
            \pay_i\rbr{
                \subfunci(\initdatai), \rbr{\subfunc_j(\varnothing)}_{j\neq i}
            }
        }
        -\costi  \colamntri
        \,=
        \utilci\rbr{M, \rbr{\strat_i, \strat_{-i}}} .
    \end{align*}
    But this contradicts that $\strati$ is the best response to $\strat_{-i}$. Therefore, $  \colamntri=0$.
\end{proof}
\vspace{-10pt}

\section{ENVY-FREE PROFIT MAXIMIZATION}
\label{app:profit_maximization}

In this section, we will demonstrate how our mechanism can also be applied for envy-free profit maximization.
For this, let us define the profit of a mechanism $M$ under a strategy profile $\strat$: the total revenue---i.e., the sum of all payments received from buyers---minus the total cost incurred for data collection: $\sum_{j\in\buyers} \pricerj - \sum_{i\in\contributors}\costi\colamntri$.
Hence, the expected profit is given by,
\begin{align*}
   \hspace{-0.2in} \profit(M, s; \mu) = 
    \EE\Bigg[
        \sum_{j\in\buyers} \pricerj  -
            \sum_{i\in\contributors}\costi\colamntri
        \Bigg].
        \numberthis
        \label{eqn:profit}
\end{align*}

\subparahead{Envy-freeness}
First, note that it is not very hard to see our mechanism in Algorithm~\ref{alg:mechanism_c_known} already maximizes profit without additional constraints -- it maximizes the total welfare, while ensuring that the buyer's utility is $0$. However, as noted in~\S\ref{sec:mechanism_cost_known}, this pricing scheme may be ``unfair'' to some buyers, as all buyers pay different amounts for the same dataset.
A natural fairness requirement in this context is envy-freeness (EF)~\citep{varian1973equity,guruswami2005profit}, which states that no buyer should prefer another's data allocation and price to theirs.
We have formalized it as a fifth desideratum, in addition to the four given in~\S\ref{sec:mechdesignproblem}:

\begin{itemize}[leftmargin=0.25in]
\item[5.] \emph{Envy-free for the buyers (EFB):}
$\mechopt$ is envy-free for the buyers if no buyer prefers another buyer's dataset size selection rule and pricing rule to her own, when contributors follow $\stratopt$.
To state this formally, note that we can write a buyer's utility as a function of her own dataset size selection rule and pricing rule (with a slight abuse of notation):
\begin{align}
\utilbj(\mech, \strat) = \underbrace{\inf_{\mu\in\RR} \EE
                \left[ \valfuncj\left( 
           \left|\,\muhat(\receiveddata_j)
           \,-\, \mu\,\right| \right)
           \right]}_{:=f(\mmj, \strat)} - \underbrace{\sup_{\mu\in\RR} \EE\left[ \pricerj \right]}_{:=g(\pricej, \strat)}
    = \utilbj(\mmj, \priceoptj, \strat).
    \label{eqn:preefutilbj}
\end{align}
With this definition, the envy-freeness requirement can be stated as:
  $\utilbj(\mmoptj, \priceoptj, \stratopt) 
    \geq \utilbj(\mmoptk, \priceoptk, \stratopt)$ for all buyers $j,k\in\buyers$.
\end{itemize}

We will show how Algorithm~\ref{alg:mechanism_c_known} can also be used for envy-free profit maximization (Our ideas can also be adapted for envy-free welfare maximization).
First, in~\S\ref{sec:profitbaseline}, we establish a profit-maximization baseline similar to~\eqref{eqn:blwelfare} which assumes non-strategic contributors.  
Unfortunately, unlike in~\eqref{eqn:blwelfare}, there are no closed form solutions for this baseline;
hence,  in~\S\ref{sec:nonstrategicalgo}, we design algorithms for envy-free profit maximization with non-strategic contributors.
Then, in~\S\ref{sec:mechanism_profit_max}, we will combine the algorithm in~\S\ref{sec:nonstrategicalgo} with Algorithm~\ref{alg:mechanism_c_known} to 
derive mechanisms in the presence of strategic contributors.

\subsection{A Baseline for Profit Maximization}
\label{sec:profitbaseline}


We construct the profit maximization baseline similar to~\eqref{eqn:blwelfare}, assuming that contributors are non-strategic and follow $\stratopt$. In this case, since buyers receive i.i.d.\ data, their valuations depend only on the data amounts $\{\mmrj\}_{j\in\buyers}$ (see~\eqref{eqn:valdataj}). Therefore, buyer $j$'s utility can be written as,
\[
\utilbj(\mech, \stratopt) = \utilbj(\mmj, \pricej,\stratopt) = \utilbj(\mmrj, \pricerj,\stratopt) = \valdataj(\mmrj) - \pricerj.
\]
Let us refer to all $\{\mmrj, \pricer_j\}_{j\in\buyers} \in (\RR\times \NN)^{|\buyers|}$ values which satisfy
IRB and EFB under $\stratopt$ as \emph{envy-free pricing schemes}.
Let $\efclass(N)$ denote all envy-free pricing schemes where the maximum amount of data sold is at most $N$, \ie $\max_j \mmrj \leq N$, and let $\efclass = \bigcup_{N\geq 0}\efclass(N)$.

\subparahead{Optimal baseline profit}
The optimal baseline profit $\blprofit$, subject to the IRB and EFB constraints, can therefore be written as shown in~\eqref{eqn:blprofitone}.
We have,
\vspace{-0.05in}
\begin{align*}
\numberthis \label{eqn:blprofitone}
    \blprofit &=
    \max_{M; \text{$M$ is envy-free}} \profit(M, \stratopt;\mu)
    = 
    \max_{\{\mmrj, \pricerj\}_{j} \in\efclass, \; \{\reqamntri\}_{i} \in\NN^{|\contributors|}  } \bigg(\; \sum_{j \in \buyers} \pricerj - \sum_{i \in \contributors} \costi \reqamntri\;\bigg).
\end{align*}
Above, the second step follows from the fact that since contributors will collect the requested amount and submit it truthfully, the broker's task reduces to finding \emph{scalar} values for dataset sizes and prices for buyers, and data collection amounts for contributors so as to maximize profit.
(Note that $\blprofit$ in~\eqref{eqn:blprofitone} for envy-free profit maximization is different to $\OPT$ in~\eqref{eqn:blwelfare} for welfare maximization.)

\subparahead{Revenue-optimal envy-free pricing schemes}
To better understand $\blprofit$, we first note that as buyer valuations only depend on the amount of data they receive (and not which contributors collected the data), the broker's decisions
for buyers' dataset sizes $\{\mmrj\}_{j\in\contributors}$ and prices $\{\pricerj\}_{j\in\contributors}$ need to depend on $\{\reqamntri\}_{i\in\contributors}$ 
only through their sum $\numdata = \sum_{i\in\contributors}\reqamntri$, i.e. the total amount of data collected.
The optimal revenue $\OPTREVmech$ under IRB, EFB constraints when the broker has received $\numdata$ total
data points can be written as shown below:

\begin{align*}
    \OPTREVmech(\numdata) = \max_{\{\mmrj, \pricerj\}_{j\in\buyers} \in \efclass(\numdata)} \left(  \sum_{j\in\buyers} \pricerj\right).
    \numberthis
    \label{eqn:revoptmech}
\end{align*}
\vspace{-0.1in}

\subparahead{From optimal revenue to optimal profit}
The following straightforward calculations lead to the following expression for the optimal envy-free profit $\blprofit$~\eqref{eqn:blprofitone}:
\begingroup
\allowdisplaybreaks
\begin{align*}
    \blprofit &= \max_{\{\mmrj, \pricerj\}_{j\in\buyers}\in\efclass, \{\reqamntri\}_{i}  } \left(\;  \sum_{j\in\buyers} \pricerj \;-\; \sum_{i\in\contributors} \costi \reqamntri  \;\right) \\
    &= \max_{\{\reqamntri\}_{i}  } \left(\max_{\{\mmrj, \pricerj\}_{j\in\buyers}\in\efclass(\sum_i \reqamntri)  }\;  \left(\sum_{j\in\buyers} \pricerj \right)\;-\; \sum_{i\in\contributors} \costi \reqamntri  \;\right) 
    \\
    &= \max_{\{\reqamntri\}_{i} } \left(\; \OPTREVmech\left(\sum_{i\in\contributors} \reqamntri\right) 
    \;-\; \sum_{i\in\contributors} \costi \reqamntri \; \right)
    \; = \;\OPTREVmech\left(\optnone\right) - \costone\optnone. 
    \numberthis
    \label{eqn:blprofittwo}
\end{align*}
\endgroup
where $\optnone = \argmax_{N\in\NN} \OPTREVmech(N) - \costone N$.
The last step uses the observation that agent $1$ has the smallest cost.
Intuitively, the broker should ask the cheapest agent to collect all the data.

Unfortunately, unlike in~\eqref{eqn:blwelfare}, there is no closed-form expressions for $\blprofit, \optnone$, which can be used as inputs for Algorithm~\ref{alg:mechanism_c_known}.
Next, we will show how these quantities can be approximated algorithmically.

\subsection{Designing Profit-optimal Envy-free Pricing Schemes with Non-strategic Contributors}
\label{sec:nonstrategicalgo}

Our first key insight here is that designing an optimal envy-free pricing scheme $\{(\mmrj, \pricerj)\}_{j\in\buyers}$ reduces to constructing a revenue-optimal pricing curve (Lemma~\ref{lem:equal_two_problem}).
Specifically, the seller posts a pricing curve $\itemprice: \{0,\dots,N\} \to [0,1]$ with $\itemprice(0) = 0$, and buyers choose their purchase quantities to maximize utility  
(in contrast, in our setting, the broker directly determines dataset sizes and prices for each buyer).  
When buyers have monotonic valuations, revenue-optimal pricing aligns with the ordered item pricing problem~\citep{chawla2022pricing}, allowing us to leverage existing algorithms.  
We first define the ordered item pricing problem below.
To simplify the exposition, we do so in the context of data pricing\footnote{In~\citet{chawla2022pricing}, the seller has $\numdata \in \mathbb{N}$ items, and a set of \emph{unit-demand} buyers, i.e. wish to purchase only a single item, with non-decreasing valuations $\valitemi: [\numdata] \to [0,1]$. While buyers may have different valuations for the same good, their preference ranking is the same. The seller should choose a pricing function $\itemprice: [\numdata] \to \mathbb{R}_+$ to maximize revenue.}.

\begin{definition}\label{def:Ordered_item_problem}(Ordered item pricing~\citep{chawla2022pricing})
Suppose the broker has $\numdata \in \mathbb{N}$ data points and posts a pricing curve $\itemprice$.  
Buyers then decide how much to purchase based on $\itemprice$.  
A utility-maximizing buyer $j$ selects $m \in \{0, \dots, \numdata\}$ to maximize utility, $\valitemj(m) - \itemprice(m)$.  
If multiple values of $m$ yield the same maximum utility, the buyer chooses the largest dataset.  
Thus, buyer $j$ purchases $\mjp(\numdata, \itemprice)$ data points, where  
\begin{align*}
\mjp(\numdata, \itemprice) \defeq
\max\left\{\argmax_{m\leq \numdata}
\big(\,\valitemj(m) - \itemprice(m) \,\big)\right\} .
    \numberthis
    \label{eqn:buyerpurchasemodel}
\end{align*}
It follows that the revenue from buyer $j$ is $\itemprice(\mjp(\numdata, \itemprice))$.
Hence, the total revenue $\ExpectREV(\numdata, \itemprice)$ of a pricing curve
$\itemprice$ is as defined below.
The broker wishes  to find a pricing
function $\optitemprice$ which  maximizes the
cumulative revenue.
The optimal revenue $\OPTREV$ can be defined as:
\begin{align*}
    \ExpectREV(\numdata, \itemprice) \defeq
        \textstyle\sum_{j\in\buyers} \itemprice\left(\mjp(\numdata, \itemprice)\right),
    \hspace{0.5in}
    \OPTREV(\numdata) \defeq \max_\itemprice \;\ExpectREV(\numdata, \itemprice).
    \numberthis
    \label{eqn:revenue}
\end{align*}
\end{definition}
\subparahead{From ordered item pricing to revenue-optimal envy-free mechanisms}
Given a pricing curve $\itemprice$ with $\numdata$ points, it is straightforward to construct an envy-free pricing scheme $\{\mmrj,\pricerj\}_{j\in\buyers} \in\efclass(\numdata)$  while achieving the same revenue as $\itemprice$, i.e.  
$\ExpectREV(\numdata, \itemprice) = \sum_{j \in \buyers} \pricerj  $.  

To do so, we can set 
$\mmrj = \mjp(\numdata, \itemprice)$ and $\pricerj = \itemprice(\mmrj)$ (see~\eqref{eqn:buyerpurchasemodel}).  
Since each buyer selects data to maximize her utility~\eqref{eqn:buyerpurchasemodel}, it follows that  
$\utilbj(\mmrj, \pricerj) \geq \utilbj(\mm, \itemprice(\mm))$ for every other $\mm\in[N]$,
and in particular, for all other buyers' dataset sizes $\{\mmk\}_{k\neq j}$. This ensures EFB.  
Moreover, as buying no data ($\mmrj = 0$ at $\pricerj = 0$) yields zero utility, IRB is also satisfied. 

This also implies that the revenue of the optimal envy-free pricing scheme is at least as large as the revenue of the optimal pricing curve, i.e. $\OPTREVmech(\numdata) \geq \OPTREV(\numdata)$ (see~\eqref{eqn:revoptmech} and~\eqref{eqn:revenue}).  

Our next lemma shows the converse, thus establishing an equivalence between optimizing ordered item pricing curves and envy-free pricing schemes.  
Thus, we have $\OPTREVmech(\numdata) = \OPTREV(\numdata)$.  
The proof  is provided in~\S\ref{sec:proofequaltwoproblem}.

\begin{lemma}
\label{lem:equal_two_problem}
Given the set of buyers $\buyers$, along with their valuations $\{\vali\}_{i\in\buyers}$.
Let $\numdata$ be the amount of data available to the broker.
For every envy-free pricing scheme $(\{\mmrj\}_{j \in \buyers}, \{\pricerj\}_{j \in \buyers}) \in\efclass(\numdata)$, there exists a 
pricing curve $\itemprice$ 
which achieves a higher revenue, \ie $\ExpectREV(\numdata, \itemprice) \geq \sum_{j\in \buyers} \pricerj$.
\end{lemma}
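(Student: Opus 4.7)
The plan is to give an explicit construction that converts any envy-free pricing scheme $\{(\mmrj, \pricerj)\}_{j \in \buyers} \in \efclass(\numdata)$ into a pricing curve $\itemprice$ whose revenue under utility-maximizing buyer responses matches or exceeds $\sum_{j} \pricerj$. My candidate is $\itemprice(0) = 0$ and, for $m \in \{1, \dots, \numdata\}$,
\[
    \itemprice(m) \;=\; \max_{k \in \buyers} \left( \pricerk + \valdatak(m) - \valdatak(\mmrk) \right).
\]
Intuitively, this is the lowest curve that weakly dominates each buyer's envy-free offer in the sense of leaving no buyer with strictly higher utility at any other quantity; it is the upper-envelope analogue of a supporting hyperplane in convex duality.

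I would then carry out two verifications. First, $\itemprice(\mmrj) = \pricerj$ for every $j$: the bound $\itemprice(\mmrj) \geq \pricerj$ follows by plugging $k = j$ into the max, while the reverse inequality unpacks to $\pricerk + \valdatak(\mmrj) - \valdatak(\mmrk) \leq \pricerj$ for every $k$, which is precisely buyer $k$'s envy-freeness toward $j$'s bundle. Second, rearranging the definition yields $\valdataj(m) - \itemprice(m) \leq \valdataj(\mmrj) - \pricerj$ for every $m \geq 1$, while IRB ($\valdataj(\mmrj) \geq \pricerj$) handles $m = 0$ where $\valdataj(0) - \itemprice(0) = 0$. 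Hence $\mmrj$ is a utility maximizer for buyer $j$ against $\itemprice$.

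To conclude I must handle the tie-breaking rule in~\eqref{eqn:buyerpurchasemodel}: buyer $j$'s selection $\mjp(\numdata, \itemprice)$ is the \emph{largest} optimum, which may strictly exceed $\mmrj$. For any such tied $m \geq \mmrj$, the equality $\valdataj(m) - \itemprice(m) = \valdataj(\mmrj) - \pricerj$ together with the defining inequality for $\itemprice(m)$ forces $\itemprice(m) \geq \pricerj + \valdataj(m) - \valdataj(\mmrj) \geq \pricerj$, using monotonicity of $\valdataj$. Summing over buyers then gives $\ExpectREV(\numdata, \itemprice) \geq \sum_{j} \pricerj = \ExpectREVmech(\{\pricerj\}_j)$. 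This tie-breaking step is the main subtlety I anticipate: the construction only certifies $\mmrj$ as \emph{one} optimum, and the POI model forces the buyer to pick the largest, so revenue preservation hinges on the ``max'' form of the construction---a piecewise-linear interpolation between the $(\mmrj, \pricerj)$ points, for instance, could lose revenue when a buyer jumps to a larger tied quantity.
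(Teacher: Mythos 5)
Your construction is correct, but it is a genuinely different one from the paper's. The paper sorts buyers by quantity, $\mmr_1 \leq \dots \leq \mmr_{|\buyers|}$, and builds the step function $\itemprice(\mm) = \pricer_k$ for $\mmr_{k-1} < \mm \leq \mmrk$; envy-freeness plus monotonicity of the $\valdataj$'s makes this step function non-decreasing and equal to $\pricerj$ at $\mmrj$, after which the same two facts you need (each buyer's chosen quantity is at least $\mmrj$, and the curve is non-decreasing) give revenue at least $\sum_j \pricerj$. Your upper-envelope curve $\itemprice(m) = \max_k \bigl( \pricer_k + \valdatak(m) - \valdatak(\mmrk) \bigr)$ instead encodes the envy-freeness constraints directly into the price at every quantity, which makes the two verifications ($\itemprice(\mmrj)=\pricerj$ and $\mmrj$ being a utility maximizer) one-line rearrangements, and it handles the largest-optimum tie-breaking rule more transparently than the paper does (the paper's final step, enlarging the argmax from $m\leq \mmr_{|\buyers|}$ to $m\leq\numdata$, is stated rather than argued, though it is true). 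What the paper's step function buys in exchange is that it automatically lands in $[0,1]$ and is non-decreasing, matching the stated format of a pricing curve. Your envelope need not: for small $m$ every term $\pricer_k + \valdatak(m) - \valdatak(\mmrk)$ can be negative (IRB gives $\pricer_k \leq \valdatak(\mmrk)$, so nothing prevents this), so $\itemprice(1)$ may be negative, violating the range $[0,1]$ and monotonicity at $0$. This is cosmetic: replacing $\itemprice(m)$ by $\max(0,\itemprice(m))$ only raises prices at quantities where no buyer sits, preserves $\itemprice(\mmrj)=\pricerj$, preserves the inequality $\valdataj(m)-\itemprice(m) \leq \valdataj(\mmrj)-\pricerj$, and preserves the lower bound $\itemprice(m)\geq \pricerj$ for $m \geq \mmrj$, so the revenue conclusion is unaffected. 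With that one-line patch your argument is a complete and valid alternative proof.
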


\insertTableApproxResults
\subparahead{Algorithms for ordered item pricing}
The ordered item pricing problem is known to be \textsf{NP}-hard~\citep{chawla2007algorithmic,hartline2005near}.  
As a result, prior work has developed approximation algorithms.  
An ordered item pricing algorithm, denoted by $A$, takes as input the buyer valuations $\{\valj\}_{j\in\buyers}$, the total amount of data $\numdata$, and an approximation parameter $\eps$.  
It outputs a pricing curve $\itemprice$ such that the cumulative revenue satisfies  
$\ExpectREV(\numdata, \itemprice) \geq \OPTREV(\numdata) - |\buyers|\bigO(\epsilon)$.
Table~\ref{tb:approxresults} summarizes algorithms from prior work under various assumptions on buyer valuations.  

\insertOIPAlgo
\parahead{Designing Profit-optimal Envy-free Mechanisms}
The above observations suggest the following procedure to design an approximately profit-optimal mechanism, by leveraging an algorithm for ordered item pricing.  
For any given $\numdata\in\NN$, we apply $A$ to determine an approximately optimal pricing curve.  
To maximize profit, we assign all data collection to contributor 1 and optimize over $\numdata$ to maximize revenue minus $\cost_1 \numdata$.  

Algorithm~\ref{alg:oipalgo} outlines this procedure.  
It takes as input buyer valuations $\{\valj\}_{j\in\buyers}$, contributor costs $\{\costi\}_{i\in\contributors}$, an ordered item pricing algorithm $A$, and an approximation parameter $\eps$.  
It returns a data quantity $\numdataA$, the corresponding pricing curve $\itemprice_{\numdataA}$, and dataset allocations $\{\mmjA\}_{j\in\buyers}$, all of which together guarantee an $\bigO(|\buyers|\eps)$-optimal revenue.  
Since buyer valuations are bounded in $[0,1]$, the maximum total buyer value is $|\buyers|$;
thus, it suffices to search for $N \leq |\buyers|/\cost_1$.  
While our brute-force approach is admittedly inefficient;
we leave it to future work to develop more computationally efficient procedures to find the optimal $\numdataA$.
The following lemma lower bounds the profit of Algorithm~\ref{alg:oipalgo}, with the proof given in Appendix~\ref{app:tech}.

\begin{restatable}{lemma}{Lempropfive}
    \label{lem:main_prop5}
    Suppose we execute Algorithm~\ref{alg:oipalgo} with an algorithm $A$ for ordered item pricing.
    Let $\numdataA$, $\{\mmjA\}_{j\in\buyers}$, $\{\pricejA\}_{j\in\buyers}$ be the returned values of Algorithm~\ref{alg:oipalgo}.
    Consider a mechanism $M$ where we let the contributors collectively collect $\numdataA$ amount of data (i.e., $\reqamntr_1 = \numdataA$, and $\reqamntr_i = 0$ for all $i \neq 1$). The broker then sell $\mmjA$ data to each buyer at price $\pricejA$, and issues payment $\payr_1  = \sum_j \pricejA$ to contributor 1 ($\payr_i = 0$, for all $i \neq 1$). Assume that the contributors follow $\stratopt$, we then have  $\profit(M,\stratopt;\mu) \geq \blprofit-|\buyers|\bigO(\epsilon)$.
    
\end{restatable}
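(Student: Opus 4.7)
The plan is to express $\profit(M,\stratopt)$ directly in terms of Algorithm~\ref{alg:oipalgo}'s intermediate quantities, and then combine three ingredients: (i) the $\argmax$ step of the algorithm, (ii) the $|\buyers|\bigO(\epsilon)$ approximation guarantee of the ordered item pricing subroutine $A$, and (iii) the identity $\OPTREVmech(\numdata) = \OPTREV(\numdata)$ coming from Lemma~\ref{lem:equal_two_problem} together with the closed form $\blprofit = \OPTREVmech(\optnone) - \cost_1\optnone$ in~\eqref{eqn:blprofittwo}.

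First I would compute $\profit(M,\stratopt)$ explicitly. Since all contributors follow $\stratopt$, only contributor $1$ incurs a data-collection cost, contributing $\cost_1\numdataA$ to the cost term, while the broker's revenue under the stated allocation is $\sum_{j\in\buyers}\pricejA = \sum_{j\in\buyers}\itemprice_{\numdataA}(\mmjA) = \ExpectREV(\numdataA,\itemprice_{\numdataA})$ by the definitions of $\mmjA$ and $\pricejA$ in Algorithm~\ref{alg:oipalgo}. Hence $\profit(M,\stratopt) = \ExpectREV(\numdataA,\itemprice_{\numdataA}) - \cost_1\numdataA$, which equals $\mathrm{profit}_{\numdataA}$ as defined inside Algorithm~\ref{alg:oipalgo}.

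Second, since $\numdataA = \argmax_N\mathrm{profit}_N$ over the search range $\{1,\dots,|\buyers|/\cost_1\}$, for any $N$ in this range I get $\profit(M,\stratopt) \geq \ExpectREV(N,\itemprice_N) - \cost_1 N$. I would evaluate this at $N=\optnone$; for that to be legal I first verify that $\optnone$ lies in the search range. This follows because buyer valuations take values in $[0,1]$, so $\OPTREVmech(N) \leq |\buyers|$ for every $N$, and hence for $N > |\buyers|/\cost_1$ the baseline profit $\OPTREVmech(N)-\cost_1 N$ is strictly negative while $N=0$ achieves $0$; so the maximizer $\optnone$ must lie in $\{0,1,\dots,|\buyers|/\cost_1\}$.

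Finally, the approximation guarantee of $A$ yields $\ExpectREV(\optnone,\itemprice_{\optnone}) \geq \OPTREV(\optnone) - |\buyers|\bigO(\epsilon)$; invoking Lemma~\ref{lem:equal_two_problem} (together with the reverse inequality $\OPTREVmech\geq\OPTREV$ sketched in the main text) to replace $\OPTREV(\optnone)$ by $\OPTREVmech(\optnone)$, and then using~\eqref{eqn:blprofittwo}, chains together into $\profit(M,\stratopt) \geq \OPTREVmech(\optnone) - \cost_1\optnone - |\buyers|\bigO(\epsilon) = \blprofit - |\buyers|\bigO(\epsilon)$. The only non-routine step is the search-range verification in the second paragraph; otherwise the argument is a chain of identities combined with the $A$-approximation inequality applied at a single well-chosen $N$.
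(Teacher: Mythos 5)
Your proof is correct and follows essentially the same route as the paper's: express $\profit(M,\stratopt)$ as $\mathrm{profit}_{\numdataA}$, use the $\argmax$ step to compare against $N=\optnone$, apply the $|\buyers|\bigO(\epsilon)$ guarantee of $A$, and pass through $\OPTREV(\optnone)=\OPTREVmech(\optnone)$ and~\eqref{eqn:blprofittwo}. The only difference is that you explicitly verify $\optnone$ lies in the search range $\{1,\dots,|\buyers|/\cost_1\}$, a point the paper relegates to the surrounding discussion rather than the proof itself; this is a welcome bit of added care but not a different argument.
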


\subsubsection{Proof of Lemma~\ref{lem:equal_two_problem}}
\label{sec:proofequaltwoproblem}

\begin{proof}
We need to show that for any envy-free pricing scheme \(\left\{(\mmrj, \pricerj)\right\}_{j \in \buyers}\), there exists a non-decreasing price curve \(\itemprice : \{0,1,\dots,\numdata\} \rightarrow [0, 1]\) that yields a revenue of at least $\sum_{j \in \buyers} \pricer_j$.
Without lose of generality, we assume $\mmr_1 \leq \mmr_2 \leq \dots \leq \mmr_{|\buyers|}$. We define \(\pricecurv\) as follows:
\begin{align*}
    {\pricecurv}(\mm) \defeq
    \begin{cases}
        \pricer_{1}, & \text{if } \mm \leq \mmr_1 \\
        \pricer_{2}, & \text{if } \mmr_1 < \mm \leq \mmr_2 \\
        \vdots \\
        \pricer_{|\buyers|}, & \text{if } \mmr_{|\buyers| - 1} < \mm \leq \numdata
    \end{cases}
\end{align*}
This price function \(\pricecurv\) is non-decreasing and has at most $|\buyers|$ steps. By the purchase model of ordered item pricing \eqref{eqn:buyerpurchasemodel}, each buyer $j$ would purchase $\mjp(\numdata, \itemprice)$ data points under price curve $\pricecurv$, where
\[\mjp(\numdata, \itemprice) \defeq
\max\left\{\argmax_{m\leq \numdata}
\big(\,\valitemj(m) - \itemprice(m) \,\big)\right\}\]

It is sufficient to show that $\mjp(\numdata, \itemprice) \geq \mmrj$  holds for every buyer $j$. 
Then, as the price curve is non-decreasing,
the revenue from each buyer when using the pricing curve $q$ would be larger than her price in the envy-free pricing scheme. Therefore,
\[
\ExpectREV(\numdata, \itemprice) = \sum_{j \in \buyers}  \pricecurv(\mjp(\numdata, \itemprice)) \geq
\sum_{j \in \buyers}\pricecurv(\mmrj) = \sum_{j \in \buyers}\pricer_j .  
\]
which will complete the proof.

To show that $\mjp(\numdata, \itemprice) \geq \mmrj$,
let us first consider an arbitrary \(\mm \leq \mmr_{|\buyers|} \). For any \(\mm \leq \mmr_{|\buyers|} \), let \(k \in [|\buyers|]\) be such that \(\mmr_{k-1} < \mm \leq \mmr_{k}\). 
As valuations are non-decreasing we have,
\begin{align*}
    v_j(\mm) - \pricecurv(\mm) & \leq v_j(\mmr_{k}) - \pricecurv(\mm)
    = v_j(\mmr_{k}) - \pricecurv(\mmr_{k}) 
    \leq v_j(\mmr_{j}) - \pricecurv(\mmr_{j}).
\end{align*}
This implies that $\mmr_j \in \underset{\mm \leq \mmr_{|\buyers|}  }{\arg\max} \left( v_j(\mm) - \pricecurv(\mm) \right) $. 
Finally, as $\mmr_{|\buyers|} \leq \numdata$,  we have
\vspace{-0.1in}
\[
\mmr_j \leq \max\Big(\underset{\mm \leq \mmr_{|\buyers|}  }{\arg\max} \left( v_j(\mm) - \pricecurv(\mm) \right)\Big) \leq \max \Big(\underset{\mm \leq \numdata}{\arg\max} \left( v_j(\mm) - \pricecurv(\mm) \right) \Big)
= \mjp(\numdata, \itemprice).
\]
\end{proof}

\subsubsection{Proof of Lemma~\ref{lem:main_prop5}}

\begin{proof}
First, let $\itemprice' = A\rbr{\cbr{\valj}_{j \in \buyers},\optnone,\eps} $ denote the price curve obtained by applying $A$ to the buyers' valuations $\cbr{\valj}_{j \in \buyers}$, total amount of data $\optnone$, and the approximation parameter $\eps$. 
Using the definition of $\blprofit$ in equation~\eqref{eqn:blprofittwo} and the equivalence between envy-free pricing schemes and ordered item pricing curves (see Lemma~\ref{lem:equal_two_problem} and the discussion above it), we have
   \begin{equation*}     
      \blprofit 
      = \;\OPTREVmech\left(\optnone\right) - \costone\optnone  = \; \OPTREV(\optnone) - \costone\optnone  .
   \end{equation*}
Since algorithm $A$ returns a $|\buyers|\bigO(\eps)$ solution we get
\begin{align*}
    \OPTREV(\optnone) - \costone\optnone
    \leq
    \ExpectREV(\optnone, \itemprice') + |\buyers|\bigO(\epsilon)- \costone\optnone .
\end{align*}
Finally, from the definition of $\numdataA$: $  \numdataA = \arg\max_{N \in \cbr{1,2,\dots, \frac{|\buyers|}{\numdata}} } {\rm profit}_N $, since $\numdataA \leq \frac{|\buyers|}{\numdata}$, we have $ {\rm profit}_{\numdataA} =\ExpectREV(\numdataA, \itemprice_{\numdataA})- \cost_1 \numdataA  \geq  \ExpectREV(\optnone, \itemprice') - \cost_1 \optnone $. Therefore,
\begin{align*}
    \ExpectREV(\optnone, \itemprice') + |\buyers|\bigO(\epsilon)- \costone\optnone  
      \; &\leq \; \ExpectREV(\numdataA, \itemprice_{\numdataA})- \cost_1 \numdataA  + |\buyers|\bigO(\epsilon) 
      \\
      &= \; \efprofit(\efscheme, \stratopt;\mu) + |\buyers|\bigO(\epsilon)
\end{align*}
so we conclude that 
\begin{equation*}
    \efprofit(\efscheme,\stratopt;\mu) 
    \geq 
    \blprofit
    - 
    |\buyers|\bigO(\epsilon) .
\end{equation*}

\end{proof}

\subsection{Designing Profit-optimal Envy-free Mechanisms with Strategic Contributors}\label{sec:mechanism_profit_max}

Our mechanism with strategic contributors will execute Algorithm~\ref{alg:mechanism_c_known} with inputs obtained from the results of Algorithm~\ref{alg:oipalgo}. The following theorem describes the mechanism formally and states its theoretical properties.

\begin{restatable}{thm}{thmNIC} 
Let $\numdataA$, $\{\mmjA\}_{j\in\buyers}$, $\{\pricejA\}_{j\in\buyers}$ be the output of Algorithm~\ref{alg:oipalgo}.
Denote $\optprofitA = \sum_{j\in\buyers}\pricejA - \costone \numdataA$ be the profit of this envy-free pricing scheme.
Let $M$ be the mechanism in Algorithm~\ref{alg:mechanism_c_known}
executed with inputs
$\inputOPT = \optprofitA$, $\inputreqamnt = \numdataA$, 
$\buyersellj = \mmjA$, and $\buyerexppricej = \pricejA$ for all $j\in\buyers$.
Then, $M$ satisfies BB, IRC, ICC, IRB, and EFB.
    Moreover, its expected profit at the well-behaved NE satisfies
    \[
    \profit(\mechopt, \stratopt;\mu) \geq \blprofit - (\cost_2 - \cost_1) - |\buyers|\bigO(\eps),
    \]
\end{restatable}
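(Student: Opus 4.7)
The plan is to leverage Theorem~\ref{thm:multi_thm} as much as possible, since the structural properties BB, IRC, and ICC depend only on the functional form of the prices and payments in~\eqref{eq:buyer_pay} and~\eqref{eq:contrib_pay}, not on the specific numerical inputs $\inputOPT$, $\inputreqamnt$, $\{\buyersellj\}_j$, $\{\buyerexppricej\}_j$. I would first verify, by inspection of the proofs in Appendix~\ref{app:mainproof} and Appendix~\ref{app:tech}, that every algebraic manipulation goes through verbatim after substituting $\numdataA$, $\optprofitA$, $\mmjA$, and $\pricejA$ in place of the welfare-case inputs. In particular, the BB cancellation is an identity in the symbols $\inputreqamnt$ and $\buyerexppricej$; the three-step ICC argument (Lemmas~\ref{lem:NIC-part1}--\ref{lem:NIC-part3}) only uses that $\reqamntr_{-i}$ and $d_i$ are fixed positive constants, so it applies unchanged. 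IRC follows identically: a well-behaved contributor's expected payment is $\costi \reqamntri + (\inputOPT + \cost_1 - \cost_2)\reqamntri/\inputreqamnt$, and since $\inputOPT = \optprofitA \geq 0$ this dominates her cost.

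Second, I would check IRB and EFB, which are the genuinely new ingredients. For IRB, the expected-price calculation in~\eqref{eqn:IRB} goes through identically under $\stratopt$ and gives $\EE[\pricerj] = \buyerexppricej = \pricejA$. By construction of Algorithm~\ref{alg:oipalgo}, we have $\mmjA = \mjp(\numdataA, \itemprice_{\numdataA})$ as in~\eqref{eqn:buyerpurchasemodel}, so buyer $j$'s utility under the pricing curve $\itemprice_{\numdataA}$ is maximized at $\mmjA$; comparing against the option $m=0$ (which yields zero utility) gives $\valdataj(\mmjA) - \pricejA \geq 0$, hence $\utilbj(M, \stratopt) \geq 0$. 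For EFB, I would use the same optimality of $\mmjA$ but now against the alternative quantity $m = \mmkA$: by~\eqref{eqn:buyerpurchasemodel},
\[
\valdataj(\mmjA) - \itemprice_{\numdataA}(\mmjA) \;\geq\; \valdataj(\mmkA) - \itemprice_{\numdataA}(\mmkA),
\]
which is exactly $\utilbj(\mmjA, \pricejA, \stratopt) \geq \utilbj(\mmkA, \pricekA, \stratopt)$ in the notation of~\eqref{eqn:preefutilbj}.

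Finally, I would derive the profit bound. Under $\stratopt$ the total expected revenue equals $\sum_{j \in \buyers} \EE[\pricerj] = \sum_{j \in \buyers} \pricejA$ by the IRB calculation, while the total collection cost is $\cost_1(\numdataA - 1) + \cost_2$; hence
\[
\profit(M, \stratopt) \;=\; \sum_{j \in \buyers} \pricejA - \cost_1(\numdataA - 1) - \cost_2 \;=\; \optprofitA + \cost_1 - \cost_2.
\]
Applying Lemma~\ref{lem:main_prop5} yields $\optprofitA \geq \blprofit - |\buyers|\bigO(\eps)$, so $\profit(M, \stratopt) \geq \blprofit - (\cost_2 - \cost_1) - |\buyers|\bigO(\eps)$, as claimed.

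I anticipate the only conceptual subtlety lies in the EFB step, because the definition in~\eqref{eqn:preefutilbj} treats the buyer's ``rule'' as the functional pair $(\mmj, \pricej)$, not the realized scalars. Since in our mechanism $\pricerj$ depends on buyer $j$ only through the scalar input $\buyerexppricej$, swapping buyer $j$'s rule for buyer $k$'s is equivalent to swapping $(\mmjA, \pricejA)$ for $(\mmkA, \pricekA)$ inside the expected-utility formula; I would state this reduction explicitly before applying the one-line envy-free comparison above to avoid any ambiguity.
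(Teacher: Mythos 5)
Your proposal is correct and follows essentially the same route as the paper: BB, IRC, and ICC are inherited verbatim from Theorem~\ref{thm:multi_thm} since only the scalar inputs change, IRB and EFB both reduce to the utility-maximizing purchase property~\eqref{eqn:buyerpurchasemodel} of the outputs of Algorithm~\ref{alg:oipalgo}, and the profit bound comes from $\profit(M,\stratopt)=\optprofitA+\cost_1-\cost_2$ combined with the near-optimality of $\optprofitA$. The only cosmetic difference is that you invoke Lemma~\ref{lem:main_prop5} as a black box for the last step, whereas the paper re-derives the same inequality chain inline.
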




\begin{proof} The proofs of the BB, ICC, and IRC properties are identical to Theorem~\ref{thm:multi_thm}---we only change the input of Algorithm\ref{alg:mechanism_c_known} while maintaining the same payment format in equations (\ref{eq:buyer_pay}) and (\ref{eq:contrib_pay}). We will prove IRB, and EFB and then prove the lower bound on the profit.

\noindent
    \textbf{(IRB):}
As shown in (\ref{eqn:IRB}), $ \mathbb{E}\sbr{\pricerj} = \buyerexppricej $, since we let $\buyerexppricej = \pricejA$ be the output of Algorithm~\ref{alg:oipalgo}, we have $\mathbb{E}\sbr{\pricerj} =  \pricejA$. Then, combing with definition of buyer's utility (\ref{eqn:utilbone}), and buyers's value for i.i.d. data (\ref{eqn:valdataj}), buyer $j$'s utility under recommended strategy is 
       \begin{align*} \label{eqn:utilityj_stratopt}
         \utilbj(M, \stratopt)  \defeq & \; 
        \utilbj(\mmoptj, \priceoptj, \stratopt)  \defeq
        \valmechj(\mech, \stratopt)  -\sup_{\mu\in\RR} \EE_{M, \stratopt} \left[\pricerj \right]\\  =&  \; \valdataj(\mmjA)-  \sup_{\mu\in\RR} \EE_{M, \stratopt} \left[\pricerj \right]= \; \valdataj(\mmjA) - \pricejA \geq 0. \numberthis
    \end{align*}
    In the last step, recall that  $(\mmjA, \pricejA)_{j \in \buyers}$ are generated by Algorithm~\ref{alg:oipalgo}, combing with the buyer's purchase model in (\ref{eqn:buyerpurchasemodel}), we have $\valdataj(\mmjA)-  \pricejA \geq 0$ .

\noindent
    \textbf{(EFB):}
    Under $M$, following $\stratopt$ is envy-free for buyers.
\begin{proof}
    To show that when contributors follow $\stratopt$, $\mech$ is envy-free for buyers, we must prove that $\utilbj(\mmoptj, \priceoptj, \stratopt) 
    \geq \utilbj(\mmoptk, \priceoptk, \stratopt)$, $\forall j,k\in\buyers$. As demonstrated in~\eqref{eqn:utilityj_stratopt}, we know $\utilbj(M, \stratopt)  \defeq \; 
        \utilbj(\mmoptj, \priceoptj, \stratopt) = \valdataj(\mmjA)- \pricejA \geq 0$. 
 Similarly, by the definition of buyers utility (\ref{eqn:utilbone}) and buyer's valuation for i.i.d. data (\ref{eqn:valdataj}), we have
       \begin{align*}           
        \utilbj(\mmoptk, \priceoptk, \stratopt)  \defeq  & \; \valdataj(\mmkA)-  \sup_{\mu\in\RR}\EE_{M, \stratopt}\cbr{\pricer_k  }  = \; \valdataj(\mmkA)- \pricekA 
 \leq \; \valdataj(\mmjA)- \pricejA .
    \end{align*}
The last step is by the definition of $\{\pricejA\}_{j \in \buyers}$ in Algorithm~\ref{alg:oipalgo}, and buyer's purchase model (\ref{eqn:buyerpurchasemodel}).
\end{proof} 

\noindent
    \textbf{Approximately Optimal Profit:}
    Under $M$, when the contributors follow following $\stratopt$, the expected profit of buyers approximates the optimal profit upper bound (defined in Theorem~\ref{thm:neup}) with additive error $|\buyers|\bigO(\eps)$.
\begin{proof}
    By definition of contributors' cumulative profit (\ref{eqn:profit}), 
    \begin{align*}
    \profit(M, \stratopt;\mu) & =  \EE_{M, \stratopt}\Bigg[
        \sum_{j\in\buyers} \pricerj -
            \sum_{i\in\contributors}\costi\reqamntri
        \Bigg]     =  \EE_{M, \stratopt}\Bigg[
        \sum_{j\in\buyers} \pricerj 
        \Bigg]  -\sum_{i\in\contributors}\costi\reqamntri  \\ & =   \sum_{j\in \buyers} \buyerexppricej -\sum_{i\in\contributors}\costi\reqamntri  = \sum_{j\in \buyers} \pricejA -\sum_{i\in\contributors}\costi\reqamntri \tag*{(By proof of \textbf{IRB}, equation (\ref{eqn:IRB}))}
\end{align*}
Recall that $\ExpectREV(\numdataA, \itemprice_{\numdataA}) =\sum_{j\in \buyers} \pricejA $
and that $\numdataA = \sum_i {\reqamntri} $, $\reqamntr_1 = \numdataA-1$, $\reqamntr_2 = 1$, we have 
   \begin{align*}
        \profit(M, \stratopt;\mu)   =& \;  
        \ExpectREV(\numdataA, \itemprice_{\numdataA}) -\cost_1 \rbr{\numdataA} -(\cost_2 - \cost_1)
         \geq  \; \ExpectREV(\optnone, \itemprice_{\optnone}) -\cost_1 \rbr{\optnone} -(\cost_2 - \cost_1)
       \\  \geq &  \; 
       \OPTREV
        \rbr{ 
            \optnone
        } 
        -  \cost_1 \rbr{\optnone} - |\buyers|\bigO(\eps)-  (\cost_2 -\cost_1)   
        =   \; \blprofit - (\cost_2 - \cost_1)- |\buyers|\bigO(\eps),
    \end{align*}
where the first inequality is by the optimality of $\numdataA$ in Algorithm~\ref{alg:oipalgo}, the second inequality is because algorithm $A$ returns a 
$|\buyers|\bigO(\eps)$ approximation, and the last step is by definition of $\OPT$ (\ref{eqn:blprofittwo}) and the fact that $\OPTREVmech(\optnone) = \OPTREV(\optnone)$.
\end{proof}

\end{proof}

\newpage
\section{Notations}
\label{sec:notation}

The following table contains the notations used in this paper. 
\begin{table}[H]
 \caption{Table of Notations.}
    \label{tab:notations}
    \centering
    \renewcommand{\arraystretch}{1.37}
    \begin{tabular}{ll}
        \hline
        Notation & Meaning \\
        \hline
        $\buyers$ & The set of buyers. \\
        $\contributors$  & The set of contributors. \\
        $\costi $ & The cost for each contributor to collect one data point. \\
        $\reqamntri $  &  Amount of data the broker asks each contributor to collect.
        \\ 
        $\reqamnti:\{\costi\}_{i\in\contributors} \to \reqamntri $  &  The function mapping the cost to $\reqamntri $.
        \\
        $\colamntri $  &  The amount of data each contributor actually collects.
        \\ 
        $\colamnti: (\reqamntri, \costi)\to  \colamntri $  &  The contributor $i$'s  strategy function to decide amount of data to collect.
        \\ 
         $ \initdatai $  &  The set of data contributor $i$ actually collects, containing $\colamntri $ i.i.d. points.
        \\ 
        $ \subdatai  $  &  The set of data each contributor reports.
        \\ 
        $ \subfunci  $  &  The contributors' data reporting function, may enable strategic behaviors (e.g., fabrication).
        \\ 
        $\strati = ( \colamnti, \subfunci) $ & Contributor $i$'s strategy profile.
        \\
        $\stratopti = (\identity, \identity)$ & Contributor is well-behaved: follow the instruction to collect $\reqamntri$ data and submit truthfully.
         \\ 
        $\mmrj$ & The amount of data points the broker allocates to each buyer $j \in \buyers$. 
        \\
        $\mmj$ & The broker's data allocation function to decide $\mmrj$. 
        \\ 
        $\pricerj$ & The broker charges buyer $j$ a price $\pricerj$ for $\mmrj $ data points.
        \\ 
        $\pricej$ & The broker's price function to decide $\pricerj$. 
        \\ 
         $\payri $  &  Payment that the broker assigns to each contributor.
        \\ 
        $ \payi $  &  The broker's payment function to decide how to reward contributor $i$ for data collection.
        \\ 
        $M  $ & Mechanism, $M= \left(\{\reqamnti\}_{i\in\contributors}, \{\mmj\}_{j \in \buyers}, \{\pricej\}_{j \in \buyers}, \{\payi\}_{i \in \contributors}\right)$. \\ 
         $\valfunc$  & Error-based buyer valuation \eqref{eqn:valerr}. 
        \\ 
        $\valmech$ &  Buyer valuation under a mechanism and a strategy profile (\ref{eqn:valb}).
        \\ 
        $\valdata$  & Buyer valuation under clean data \eqref{eqn:valdataj}. 
        \\ 
        $\utilbj(\mech, \stratopt)$& Buyer utility under mechanism $M$ and strategy $\strat $ \eqref{eqn:utilbone}. 
        \\ 
        $\welf(M,\strat)$ & Welfare of a mechanism $M$ under a strategy profile $\strat$ (\ref{eqn:welfare}).
        \\ 
        $\OPT$ & Welfare (\ref{eqn:blwelfare}) or profit (\ref{eqn:blprofitone})  maximization baseline.
        \\ 
        $\reqamntoneopt$ & The amount of data to be collected to achieve welfare/profit-optimal welfare baseline $\OPT$.
        \\ 
        $\inputOPT$ & Input parameter in Algorithm~\ref{alg:mechanism_c_known}, denoting the welfare/profit-optimal baseline. 
        \\ 
        $\inputreqamnt$ & Input parameter in Algorithm~\ref{alg:mechanism_c_known}, denoting the total amount of data to be collected. 
        \\ 
        $\buyersellj$ & Input parameter in Algorithm~\ref{alg:mechanism_c_known}, denoting the amount of data to sell to buyers.
        \\ 
        $\buyerexppricej$ & Input parameter in Algorithm~\ref{alg:mechanism_c_known}, denoting the expected price to charge buyers.
        \\ 
        $\profit(M,\strat)$ & Profit of a mechanism $M$ under a strategy profile $\strat $  \eqref{eqn:profit}.
        \\ 
        $A$ & Ordered item pricing algorithm, introduced in Section~\ref{sec:nonstrategicalgo}.
        \\ \hline
    \end{tabular}
    \vspace{0.5em}
\end{table}


\end{document}